\documentclass[a4paper]{article}

\usepackage{RRA4}
\usepackage{hyperref}
\usepackage[centertags]{amsmath}
\usepackage{latexsym}
\usepackage{ amsfonts, amssymb}
\usepackage{amsthm}
\usepackage{algorithm}
\usepackage[noend]{algorithmic}
\usepackage{hyperref}

\theoremstyle{plain}
\newtheorem{thm}{Theorem}[section]
\newtheorem{lem}[thm]{Lemma}
\newtheorem{prop}[thm]{Proposition}
\newtheorem{cor}[thm]{Corollary}

\theoremstyle{definition}
\newtheorem{defn}[thm]{Definition}
\newtheorem{prob}[thm]{Problem}

\theoremstyle{remark}
\newtheorem{rem}[thm]{Remark}

\newcommand{\deff}[1]{\emph{#1}}

\newcommand{\ceiling}[1]{\ensuremath{\left\lceil#1\right\rceil}}
\newcommand{\ev}{\operatorname{ev}}
\newcommand{\mult}{\operatorname{mult}}

\def\N{{\mathbb N}}

\def\Z{{\mathbb Z}}
\def\P{{\mathbb P}}

\def\F{\mathbb{F}}
\def\G{\mathcal{G}}
\def\D{\mathcal{D}}
\def\E{\mathcal{E}}
\def\A{\mathcal{A}}

\def\D{\mathcal{D}}
\def\supp{\textrm{Supp}}
\def\code{\mathcal{C}}
\def\genus{g(C)}
\def\egaledef{\triangleq}
\def\dimGRS{k_{GRS}}
\def\deltaGRS{\delta_{GRS}}
\def\dGRS{d_{GRS}}
\def\bigO{\mathcal{O}}
\def\wdeg{\operatorname{wdeg}}
\def\pseudoR{R'}
\newcommand{\intervalle}[2]{[#1,#2]}

\usepackage[utf8]{inputenc}
\RRdate{Décembre 2010}

\RRauthor{
Daniel Augot\thanks[sfn]{INRIA Saclay Île-de-France}%
  \and
Morgan Barbier  \thanksref{sfn}
\and 
Alain Couvreur\thanks{Insitut de Mathématiques de Bordeaux, UMR 5251 - Université Bordeaux 1, 351 cours de la Libération, 33405 {\sc Talence} Cedex}%
}
\RRtitle{Décodage en liste des codes de Goppa binaires jusqu'à la borne de Johnson binaire}
\RRetitle{List-decoding of binary Goppa codes up to the binary Johnson bound}
\titlehead{List-decoding of binary Goppa codes}
\RRresume{Nous étudions le décodage en liste des codes alternants,
dont notamment les codes de Goppa classiques.  La considération
majeure est de prendre en compte la taille de l'alphabet, qui influe
sur la capacité de correction, surtout dans le cas de l'alphabet
binaire. Cela revient à comparer la borne de Johnson que nous appelons
générique, à la borne de Johnson que nous appelons $q$-aire, qui prend
en compte la taille $q$ du corps. Cette différence est d'autant
plus sensible que $q$ est petit.

Essentiellement, le cas le plus favorable est celui de l'alphabet
binaire pour lequel on peut augmenter significativement le rayon du décodage en
liste. Et ce, d'autant plus que la distance minimale relative construite du
code alternant binaire est proche de $1/2$.

Bien que le résultat annoncé ici, à savoir le rayon de décodage en
liste des codes de Goppa binaires, soit nouveau, il peut assez
facilement être déduit de sources relativement peu connues
(V.~ Guruswami, R.~M.~Roth and I.~Tal, R~.M.~Roth) et dont les auteurs 
n'ont apparemment pas pensé à aborder les codes de Goppa binaires.
Seul D.~J.~Bernstein
a traité le décodage en liste des codes de
Goppa dans une prépublication.
Les références sont données dans l'introduction.

Nous proposons un contenu autonome, et aussi une analyse de la
complexité de l'algorithme étudié, qui est quadratique en la longueur $n$
du code, si on se tient à distance du rayon relatif de décodage
maximal, et en $\mathcal{O}(n^7)$ pour le rayon de décodage
maximal.}

\RRabstract{We study the list-decoding problem of alternant codes,
with the notable case of classical Goppa codes. The major
consideration here is to take into account the size of the alphabet,
which shows great influence on the list-decoding radius. This amounts
to compare the \emph{generic} Johnson bound to the \emph{$q$-ary}
Johnson bound. This difference is important when $q$ is very small.

Essentially, the most favourable case is $q=2$, for which the
decoding radius is greatly improved, notably when the relative minimum
distance gets close to $1/2$.

Even though the announced result, which is the list-decoding radius of
binary Goppa codes, is new, it can be rather easily made up from
previous sources (V.~ Guruswami, R.~M.~Roth and I.~Tal, R~.M.~Roth),
which may be a little bit unknown, and in which the case
of binary Goppa codes has apparently not been thought at. Only D.~J.~Bernstein
treats the case of binary Goppa codes in a preprint. References are given in the introduction.

We propose an autonomous treatment and also a complexity analysis of
the studied algorithm, which is quadratic in the blocklength $n$, when decoding at some
distance of the relative maximum decoding radius, and in $\mathcal{O}(n^7)$ when
reaching the maximum radius.}
\RRmotcle{Codes correcteurs d'erreur, codes géométriques,
décodage en liste, codes alternants, codes de Goppa binaires}
\RRkeyword{ Error correcting codes, algebraic geometric codes, list-decoding,
alternant codes, binary Goppa codes }
\RRprojets{TANC}
\RRdomaine{2} 
\RCSaclay 
\begin{document}
\RRNo{7490}
\makeRR

\section{Introduction}
In 1997, Sudan presented the first list-decoding algorithm
for Reed-Solomon codes~\cite{SudanAlgo} having a low, yet positive,
rate. Since the correction radius of Sudan's algorithm for these codes
is larger than the one obtained by unambiguous decoding algorithms,
this represented an important milestone in list-decoding, which was
previously studied at a theoretical level.
See~\cite{Elias:IEEE_IT1991} and references therein for considerations
on the ``capacity'' of list-decoding.
Afterwards, Guruswami and Sudan
improved the previous algorithm by adding a multiplicity constraint in
the interpolation procedure.
These additional constraints enable to increase the correction
radius of Sudan's algorithm for Reed-Solomon codes of any
rate~\cite{GSListDecoding}. The number of errors that this algorithm
is able to list-decode corresponds to the \emph{Johnson radius}
$e_\infty(n,d)=\ceiling{n-\sqrt{n(n-d)}}-1$, where $d$ is the
minimum distance of the code.

Actually, when the size $q$ of the alphabet is properly taken into
account $q$, the bound is improved up to
\[
e_q(n,d)=\ceiling{\theta_q\left(n-\sqrt{n\left(n-\frac d{\theta_q}\right)}\right)}-1,
\]
where $\theta_q=1-\frac1q$. See~\cite[Chapter 3]{Guruswami:ARILD2007} for a complete discussion about
these kinds of bounds, relating the list-decoding radius to the
minimum distance. Dividing by $n$, and taking relative values, with
$\delta=\frac dn$, we define $\tau_\infty(\delta)=\frac{e_\infty(n,d)}{n}$, and
$\tau_q(\delta)=\frac{e_q(n,d)}{n}$, which are
\begin{equation}
\tau_\infty(\delta)=1-\sqrt{1-\delta}, \quad 
\tau_q(\delta)=\theta_q\left(1-\sqrt{1-\frac {\delta}{\theta_q}}\right)
\end{equation}
Note that $\tau_q(\delta)$ gets decreasingly close to
$\tau_\infty(\delta)$ when $q$ grows, and that $\tau_2(n,q)$ is the
largest, see Figure~\ref{Fig:ErrorCapacityVaryingq}. We call
$\tau_\infty(\delta)$ the \emph{generic Johnson bound}, which does not
take into account the size of the field, and indeed works over any
field, finite or not. We refer to $\tau_q(\delta)$ as the $q$-ary
Johnson bound, where the influence of $q$ is properly reflected.

\begin{figure}[ht]
  \begin{center} \includegraphics[width=10cm]{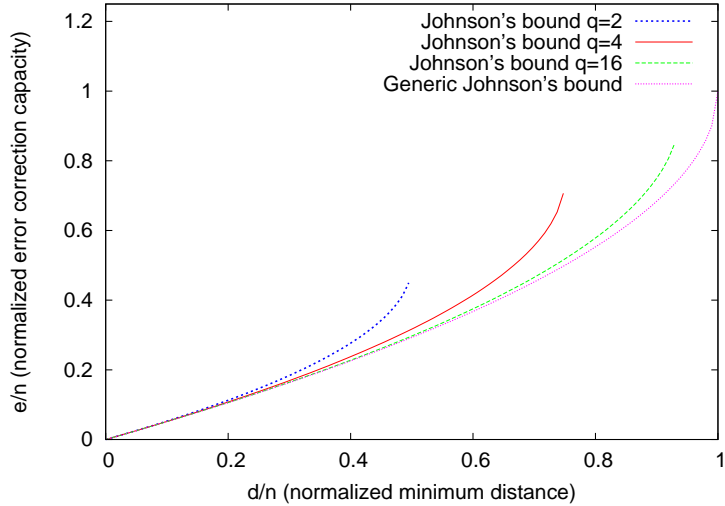}
    \caption{\label{Fig:ErrorCapacityVaryingq} Comparison of the limit
    generic Johnson bound $\tau_\infty(\delta)$ and the limit $q$-ary
    Johnson bounds $\tau_q(\delta)$, for small $q$. Note that the each
    curve ends at $\delta=\theta_q=1-\frac1q$, which is the maximum
    relative minimum distance of codes of positive rates over $\F_q$,
    from the Plotkin bound.} \end{center}
\end{figure}

The truth is that the $\tau_q(\delta)$ radius can be reached for the
whole class of alternant codes, and this paper presents how to do
this. We have essentially compiled existing, but not very well-known
results, with the spirit of giving a report on the issue of list-decoding
classical algebraic codes over bounded alphabets. First, we
have to properly give credits.

Considering the possibility of varying multiplicities, Koetter and
Vardy proposed in 2001, an algebraic soft-decision decoding algorithm
for Reed-Solomon codes~\cite{Koetter}. This method is based on an
interpolation procedure which is similar to Guruswami-Sudan's
algorithm, except that the set of interpolation points is two
dimensional, and may present varying multiplicities, according the
reliability measurements given by the channel. Note that the idea of
varying multiplicities was also considered in~\cite{GSListDecoding}, as
the ``weighted polynomial reconstruction problem'', but was not
instantiated for particular cases, as it was done by Koetter and
Vardy. Before the publication of~\cite{Koetter}, also circulated a
preprint of Koetter and Vardy~\cite{Koetter-Vardy:PREPRINT2000}, which
was a greatly extended version of~\cite{Koetter}, with many possible
interesting instances of the weighted interpolation considered. In
particular, the authors discussed the decoding of BCH codes over the binary
symmetric channel, and reached in fact an error capacity which is
nothing else than $\tau_2(\delta)$. Note that BCH codes are nothing else
than alternant codes, with benefits when the alphabet is $\F_2$. This was
not published.

Guruswami-Sudan's algorithm is in fact very general and can also
be applied to (one point) Algebraic Geometric codes as also shown by
Guruswami and Sudan in \cite{GSListDecoding}.
By this manner, one also reaches the Johnson
radius $\ceiling{n-\sqrt{n(n-d^\star)}}-1$, where $d^\star$ is the
Goppa designed distance. Contrarily to Reed-Solomon codes, it is
possible, for a fixed alphabet $\F_q$, to construct Algebraic Geometric
codes of any length. In this context, it makes sense to try to reach
the $q$-ary Johnson bound $\tau_q(\delta)$, which is done in
Guruswami's thesis~\cite{GuruswamiPhD}, at the end of Chapter 6.

Apparently independently, Roth and Tal considered the list-decoding
problem in~\cite{RothTalISIT}, but only an one page
abstract. Roth's book~\cite{Roth:ITCT2006}, where
many algebraic codes are
presented through the prism of \emph{alternant
codes}, considers the list-decoding of these codes and shows how
to reach the $q$-ary Johnson radius $\tau_q(\delta)$, where
$\delta$ is the minimum distance of the Generalised Reed-Solomon
code from which the alternant code is built. Note that alternant
codes were considered in~\cite{GSListDecoding}, but only the generic
Johnson Bound $\tau_\infty(\delta)$ was discussed there.

Among the alternant codes, the \emph{binary Goppa codes} are
particularly important. They are not to be confused with Goppa's Algebraic
Geometric codes, although there is a strong connection which
is developed in Section~\ref{SecGoppaAG}. These codes are constructed
with a \emph{Goppa polynomial} $G(X)$ of degree $r$ and if this
polynomial is square-free, then the distance of these codes is at
least $2r+1$ which is almost the double of $r$, which is what would be
expected for a generic alternant code. In fact, using the statements
in~\cite{Roth:ITCT2006}, and using the fact that the Goppa code built
with $G(X)$ is the same as the Goppa code built with $G(X)^2$, it is
explicit that these codes can be list-decoded up to the radius
\begin{equation}
\ceiling{\frac 12\left(n-\sqrt{n\left(n-(4r+2)\right)}\right)}-1.
\end{equation}
But actually, the first author who really considered the list-decoding of
binary Goppa codes is D.~J.~Bernstein~\cite{BernsteinLDGC}, in a
preprint which can be found on his personal web page. He uses a
completely different approach than interpolation based list-decoding
algorithms, starting with Patterson's
algorithm~\cite{Patterson} for decoding classical Goppa
codes. Patterson's algorithm is designed to decode up to $t$ errors,
and to list-decode further, Bernstein reuses the information obtained
by an unsuccessful application of Patterson's algorithm in a smart
way. It is also the approach used by Wu~\cite{Wu:IEEE_IT2008} in his
algorithm for list-decoding Reed-Solomon and BCH codes, where the
Berlekamp-Massey algorithm is considered instead of Patterson's
algorithm. Notice that Wu can reach the binary Johnson bound
$\tau_2(\delta)$, using very particular properties of Berlekamp-Massey
algorithm for decoding \emph{binary} BCH
codes~\cite{Berlekamp:AGT1968,Berlekamp:ACTRE}. However, Wu's
approach can apparently not be straightforwardly applied to Goppa codes.

\paragraph{Organisation of the paper}
Section~\ref{ListDecoding} is devoted to recall the list-decoding
problem, the Johnson bounds generic or $q$-ary, and
Section~\ref{Alternant} to the definitions of the codes we wish to
decode, namely alternant codes and classical Goppa
codes. Section~\ref{SecGoppaAG} shows how to consider classical Goppa
codes as subfield subcodes of Algebraic Geometric Goppa codes. Then,
using Guruswami's result in~\cite{GuruswamiPhD}, it is
almost straightforward to show that these codes can be decoded up to
the binary Johnson bound $e_2(n,d^\star)$, where $d^\star=2r+1$. However,
this approach is far reaching, and the reader may skip
Section~\ref{SecGoppaAG}, since Section~\ref{DecodeClassicGoppa}
provides a self-contained treatment of the decoding of alternant
codes up to the $q$-ary Johnson bound. Essentially, this amounts to
show how to pick the varying multiplicities, but we also study the
dependency on the multiplicity. This enables us to give an estimation
of the complexity of the decoding algorithm, which is quadratic in the
length $n$ of the code, when one is not too greedy.

\section{List-decoding}\label{ListDecoding}
First,
recall the notion of list-decoding and multiplicity.
\begin{prob}
  Let $\code$ be a code in its ambient space $\mathbb{F}_q^n$. The
  list-decoding problem of $\code$ up to the radius
  $e\in\intervalle0n$ consists, for any $y$ in $\mathbb{F}_q^n$, in
  finding all the codewords $c$ in $\code$ such that $d(c,y)\le e$.
\end{prob}
The main question is: how large can $e$ be, such that the list keeps a
reasonable size? A partial answer is given by the so-called Johnson
bound.

\section{Classical Goppa Codes}\label{Alternant}
This section is devoted to the study of classical $q$--ary Goppa
codes, regarded as alternant codes (subfield subcodes of Generalised
Reed--Solomon codes) and as subfield subcodes of Algebraic Geometric
codes.  Afterwards, using Guruswami's results \cite{GuruswamiPhD}
on soft-decoding of Algebraic Geometric codes, we prove that
classical Goppa codes can be list-decoded in polynomial time up to the
$q$--ary Johnson bound.

\paragraph{Context}
In this section, $q$ denotes an arbitrary prime power and $m, n$
denote two positive integers such that $m\geq 2$ and $n\leq q^m$.  In
addition $L\egaledef (\alpha_1, \ldots , \alpha_n)$ denotes an
$n$--tuple of distinct elements of $\F_{q^m}$.

\subsection{Classical  Goppa codes}

\begin{defn}
Let $r$ be an integer such that $0<r<n$.  Let $G\in \F_{q^m}[X]$ be a
polynomial of degree $r$ which does not vanish at any element of $L$.
The $q$--ary classical Goppa code $\Gamma_q (L,G)$ is defined by
$$
\Gamma_q (L,G)\egaledef \left\{(c_1, \ldots, c_n)\in \F_q^n\ \left|\ \sum_{i=1}^n \frac{c_i}{X-\alpha_i}\equiv 0\ \textrm{mod}\ (G(X)) \right.  \right\} \cdot
$$  
\end{defn}

\subsection{Classical Goppa codes are alternant}
\begin{defn}[Evaluation map]\label{EvaluationMap}
Let $B\egaledef (\beta_1, \ldots , \beta_n)$ be an $n$--tuple of
elements of $\F_{q^m}^{\times}$, and $L\egaledef (\alpha_1, \ldots ,
\alpha_n)$ denotes an $n$--tuple of distinct elements of $\F_{q^m}$. The associated \deff{evaluation map} is:
$$
\ev: \left\{ \begin{array}{ccc}
\F_{q^m}[X] & \rightarrow & \F_{q^m}^ n\\
f(X)&\mapsto & (\beta_1 f(\alpha_1), \ldots , \beta_n f(\alpha_n))
\end{array}\right. .
$$
\end{defn}

\begin{defn}[Generalised Reed--Solomon code]\label{GRS}
Let $B\egaledef (\beta_1, \ldots , \beta_n)$ be an $n$--tuple of
elements of $\F_{q^m}^{\times}$. Let $k$ be a positive integer. The
\emph{Generalised Reed--Solomon} code (or GRS code) over $\F_{q^m}$
associated to the triple $(L,B,k)$ is the code:
$$
GRS_{q^m}(L,B,k)\egaledef 
\left\{ \ev(f(X)) \ | \ f\in \F_{q^m}[X]_{<k} \right\},
$$ 
where $\ev$ denotes the evaluation map in Definition \ref{EvaluationMap}.
This code has parameters $[n,k,n-k+1]_{q^m}$ (\cite{SlMcW} $Ch 10, \S
8, p303$
).
\end{defn}

\begin{defn}[Subfield Subcode]
  Let $K$ be a finite field and $M/K$ be a finite extension of it.
Let $\code$ be a code of length $n$ with coordinates in $M$, the \emph{subfield subcode} $\code_{|K}$ of $\code$ is the code
$$
\code_{|K}\egaledef \code\cap K^n.
$$
\end{defn}

\begin{defn}[Alternant code]
A code is said to be \emph{alternant} if it is a subfield subcode of a GRS code.  
\end{defn}

In particular, classical $q$--ary Goppa codes are alternant.
Let us describe a GRS code over $\F_{q^m}$ whose subfield subcode over $\F_q$ is $\Gamma_q(L,G)$.

\begin{prop}\label{GoppaGRS}
Let $r$ be an integer such that $0<r<n$ and $G\in \F_{q^m}[X]$ be a polynomial of degree $r$ which does not vanish at any element of $L$.
Then, the classical Goppa code $\Gamma_q (L,G)$ is the subfield subcode $GRS_{q^m}(L,B,n-r)_{|\F_q}$, where $B=(\beta_1, \ldots, \beta_n)$ is defined by
$$
\forall i \in \{1, \ldots , n\},\ \beta_i\egaledef \frac{G(\alpha_i)}{\prod_{j \neq   i} (\alpha_i-\alpha_j)} \cdot
$$
\end{prop}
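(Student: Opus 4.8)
The plan is to turn the congruence defining $\Gamma_q(L,G)$ into an honest polynomial identity, read off a parity-check matrix from it, recognise the row space of that matrix as a GRS code, and finally dualise to land exactly on $GRS_{q^m}(L,B,n-r)$.

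First I would invert $X-\alpha_i$ in the quotient ring $\F_{q^m}[X]/(G(X))$, which is legitimate precisely because $G(\alpha_i)\neq 0$ for every $i$. A one-line verification gives the explicit inverse
\[
\frac{1}{X-\alpha_i}\equiv -\frac{1}{G(\alpha_i)}\cdot\frac{G(X)-G(\alpha_i)}{X-\alpha_i}\pmod{G(X)},
\]
where the rightmost factor is a genuine polynomial of degree $r-1$. Substituting this into $\sum_i \frac{c_i}{X-\alpha_i}\equiv 0 \pmod{G}$ produces a polynomial of degree at most $r-1$ that is divisible by $G$, which has degree $r$; hence it must be the zero polynomial. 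This degree argument removes the modular condition entirely and yields the exact identity
\[
\sum_{i=1}^n \frac{c_i}{G(\alpha_i)}\cdot\frac{G(X)-G(\alpha_i)}{X-\alpha_i}=0 \quad\text{in } \F_{q^m}[X].
\]

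Next I would expand $\frac{G(X)-G(\alpha_i)}{X-\alpha_i}$ in the basis $1,X,\dots,X^{r-1}$ and set each coefficient of $X^t$ to zero, exhibiting $c$ as an element of the $\F_q$-kernel of an $r\times n$ matrix. Because the leading coefficient of $G$ is nonzero, the resulting (triangular) system of rows can be cleared by elementary row operations to the matrix with $(t,i)$-entry $\alpha_i^{t}/G(\alpha_i)$ for $0\le t\le r-1$, which leaves the kernel unchanged. The $\F_{q^m}$-row space of this matrix is exactly $\{(h(\alpha_1)/G(\alpha_1),\dots,h(\alpha_n)/G(\alpha_n)) : \deg h<r\}$, that is, the code $GRS_{q^m}(L,(G(\alpha_1)^{-1},\dots,G(\alpha_n)^{-1}),r)$. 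Consequently $\Gamma_q(L,G)=\bigl(GRS_{q^m}(L,(G(\alpha_i)^{-1})_i,r)\bigr)^{\perp}\cap\F_q^n$.

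It then remains to identify this dual. Here I would invoke the standard GRS duality: the dual of $GRS_{q^m}(L,v,k)$ is $GRS_{q^m}(L,v^{\perp},n-k)$ with $v^{\perp}_i=\bigl(v_i\prod_{j\neq i}(\alpha_i-\alpha_j)\bigr)^{-1}$, a consequence of the partial-fraction identity $\sum_i \frac{f(\alpha_i)g(\alpha_i)}{\prod_{j\neq i}(\alpha_i-\alpha_j)}=0$ valid whenever $\deg(fg)\le n-2$ (which covers $\deg f<r$ and $\deg g<n-r$). Applying it with $v_i=G(\alpha_i)^{-1}$ and $k=r$ gives $v^{\perp}_i=G(\alpha_i)/\prod_{j\neq i}(\alpha_i-\alpha_j)=\beta_i$, so the dual is precisely $GRS_{q^m}(L,B,n-r)$; intersecting with $\F_q^n$ concludes. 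The main obstacle I anticipate is not conceptual but a matter of careful bookkeeping of constants: the explicit inverse formula and the exact normalisation in the GRS duality must be tracked so that the dual weights come out as the stated $\beta_i$ and not merely a scalar multiple of them.
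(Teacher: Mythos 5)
Your proof is correct and complete: the inversion of $X-\alpha_i$ modulo $G$, the degree argument killing the congruence, the row-reduction to the parity-check matrix with entries $\alpha_i^{t}/G(\alpha_i)$, and the GRS duality computation all check out, and the weights do come out as exactly $\beta_i$. The paper itself gives no proof here --- it only cites MacWilliams--Sloane (Ch.~12, \S 3, Thm.~4) --- and your argument is essentially the standard one from that reference, so there is nothing to contrast.
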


\begin{proof}
See \cite{SlMcW} $Ch 12, \S 3 , p340, Thm 4$.

\end{proof}

\subsection{A property on the minimum distance of classical Goppa codes}
Let $L\egaledef (\alpha_1, \ldots , \alpha_n)$ be an $n$--tuple of distinct elements of $\F_{q^m}$ and $G\in \F_{q^m}[X]$ be a polynomial of degree $r>0$ which does not vanish at any element of $L$.
Since $\Gamma_q (L,G)$ is the subfield subcode of a GRS code with
parameters $[n, n-r, r+1]_{q^m}$, the code $\Gamma_q (L,G)$ has
parameters $[n,\geq n-mr,\geq r+1]_q$ (see \cite{sticht} Lemma
VIII.1.3 and \cite{SlMcW} $Ch 12, \S 3, p339$).

In addition, it is possible to get a better estimate of the minimum distance in some situations. This is the objective of the following result.

\begin{thm}\label{Egalite_generalisee}
Let $L\egaledef (\alpha_1, \ldots , \alpha_n)$ be an $n$--tuple of distinct elements of $\F_{q^m}$. 
Let $G\in \F_{q^m}[X]$ be square-free polynomial which does not vanish at any element of $L$ and such that $0<\deg (G)< n/q$.
Then,
$$
\Gamma_q (L,G^{q-1})=\Gamma_q (L,G^q).
$$
\end{thm}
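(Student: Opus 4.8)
The plan is to observe that one inclusion is free and to concentrate all the work on the other. Since $G^{q-1}$ divides $G^q$, any $c$ satisfying $\sum_i \frac{c_i}{X-\alpha_i}\equiv 0 \pmod{G^q}$ also satisfies the congruence modulo $G^{q-1}$, so $\Gamma_q(L,G^q)\subseteq \Gamma_q(L,G^{q-1})$. (This is also where the hypothesis $\deg G<n/q$ is used: it guarantees $\deg(G^q)=q\deg G<n$, so that $G^q$ is a legitimate Goppa polynomial and the right-hand code is even defined.) The entire content is therefore the reverse inclusion $\Gamma_q(L,G^{q-1})\subseteq \Gamma_q(L,G^q)$.

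To prove it I would fix $c=(c_1,\dots,c_n)\in \Gamma_q(L,G^{q-1})$ and study the syndrome $S(X)=\sum_{i=1}^n \frac{c_i}{X-\alpha_i}$ as a rational function. Because $G$ does not vanish at any $\alpha_i$, $S$ has no pole at any root of $G$; and because $G$ is \emph{square-free}, it is a product of distinct simple roots, so divisibility of $S$ by a power of $G$ can be tested root by root. Concretely, writing $v_\beta$ for the order of vanishing at a root $\beta$ of $G$, the condition $S\equiv 0 \pmod{G^k}$ is equivalent to $v_\beta(S)\ge k$ at every root $\beta$. Expanding $S$ in powers of $(X-\beta)$ produces the moments $S_k=\sum_i \frac{c_i}{(\alpha_i-\beta)^{k+1}}$ as its coefficients, and the hypothesis $v_\beta(S)\ge q-1$ says precisely that $S_0=S_1=\cdots=S_{q-2}=0$.

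The crucial step is a Frobenius identity linking the last moment $S_{q-1}$ back to $S_0$. Writing $q=p^s$ with $p$ the characteristic of $\F_{q^m}$, the $q$-power map is a field endomorphism that distributes over the finite sum defining $S_0$; moreover each coordinate lives in $\F_q$, so $c_i^q=c_i$. Hence $S_0^{\,q}=\sum_i \frac{c_i^{\,q}}{(\alpha_i-\beta)^q}=\sum_i \frac{c_i}{(\alpha_i-\beta)^q}=S_{q-1}$. Since $q-1\ge 1$, the hypothesis already forces $S_0=0$, and therefore $S_{q-1}=S_0^{\,q}=0$. Combined with $S_0=\cdots=S_{q-2}=0$ this yields $v_\beta(S)\ge q$ at the arbitrary root $\beta$, whence $S\equiv 0 \pmod{G^q}$ and $c\in\Gamma_q(L,G^q)$, completing the inclusion.

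The main obstacle is not the Frobenius computation itself, which is short, but the careful reduction of the congruence to local vanishing conditions at the roots: this is exactly the place where square-freeness of $G$ is indispensable, since it ensures each root of $G$ is simple so that $G^k$ imposes order exactly $k$ there (for a non-square-free $G$ the local orders would be scrambled by the multiplicities and the argument would break). A minor but pleasant point worth flagging is that, although the hypothesis supplies the vanishing of all moments $S_0,\dots,S_{q-2}$, only the single equation $S_0=0$ is actually needed to gain the extra order of vanishing; everything else is bookkeeping. For full rigour one may either pass to a splitting field of $G$ or argue modulo each irreducible factor $P\mid G$ inside the residue field $\F_{q^m}[X]/(P)$, where $\bar X$ plays the role of $\beta$; the identity $S_{q-1}=S_0^{\,q}$ holds verbatim in that field since it too has characteristic $p$.
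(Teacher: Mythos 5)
Your proof is correct. The paper does not prove this theorem itself but simply cites Bernstein--Lange--Peters (Theorem 4.1 of \cite{BernLangPeters}), and the argument there is essentially the one you give: reduce, via square-freeness, to the simple roots of $G$ in a splitting field, expand the syndrome locally, and use the Frobenius identity $S_{q-1}=S_0^{\,q}$ (valid because the $c_i$ lie in $\F_q$) to upgrade vanishing to order $q-1$ into vanishing to order $q$.
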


\begin{proof}
\cite{BernLangPeters} Theorem 4.1.
\end{proof}

The codes $\Gamma_q(L, G^{q-1})$ and $\Gamma_q (L, G^q)$ are subfield subcodes of two distinct GRS codes but are equal.
The GRS code associated to $G^{q-1}$ has a larger dimension than the one associated to $G^q$ but a smaller minimum distance. Thus, it is interesting to deduce a lower bound for the minimum distance from the GRS code associated to $G^q$ and a lower bound for the dimension from the one associated to $G^{q-1}$.

This motivates the following definition.

\begin{defn}\label{designed}
  In the context of Theorem \ref{Egalite_generalisee}, the designed minimum distance of $\Gamma_q (L,G^{q-1})$ is $d_{\textrm{Gop}}^{\star}\egaledef q\deg(G)+1$. It is a lower bound for the actual minimum distance.
\end{defn}

\begin{rem}\label{generalcase}
  Using almost the same proof, Theorem \ref{Egalite_generalisee} can be generalised as: let $G_1, \ldots , G_t$ be irreducible polynomials in $\F_{q^m}[X]$ and $e_1, \ldots , e_t$ are positive integers congruent to $-1 \mod q$, then
$
\Gamma_q (L, G_1^{e_1}\cdots G_t^{e_t})=\Gamma_q (L, G_1^{e_1+1}\cdots G_t^{e_t+1})
$.
\end{rem}


\section{List-decoding of classical Goppa Codes as Algebraic Geometric codes}\label{SecGoppaAG}

In this section, $C$ denotes a smooth projective absolutely irreducible curve over a finite field $\F_q$.
Its genus is denoted by $\genus$.

\subsection{Prerequisites in algebraic geometry}
The main notions of algebraic geometry used in this article are summarised in what follows. For further details, we refer the reader to \cite{fulton} for theoretical results on algebraic curves and to \cite{sticht} and \cite{NTV} for classical notions on Algebraic Geometric codes.

\paragraph{Points and divisors}
If $k$ is an algebraic extension of the base field $\F_q^m$, we denote by $C(k)$ the set of $k$--rational points of $C$, that is the set of points whose coordinates are in $k$.

The group of divisors $\textrm{Div}_{\overline{\F}_q}(C)$ of $C$ is the free abelian group generated by the geometric points of $C$ (i.e. by $C(\overline{\F}_q)$).
Elements of
$\textrm{Div}_{\overline{\F}_q}(C)$ are of the form
$\G=\sum_{P\in C(\overline{\F}_q)} a_P P$, where the $a_P$'s are integers and are all zero but a finite number of them.
The support of $\G=\sum a_P P$ is the finite set
$$\supp (\G)\egaledef \{P\in C(\overline{\F}_q) \ |\ a_P\neq 0\}.$$

The group $\textrm{Div}_{\F_q}(C)$ of $\F_q$--rational divisors is the subgroup of $\textrm{Div}_{\overline{\F}_q}(C)$ of divisors which are fixed by the Frobenius map.

A partial order is defined on divisors:
$$\D=\sum d_P P\geq \E=\sum e_P P \ \Leftrightarrow \ \forall P\in C(\overline{\F}_q),\ d_P \geq e_P.$$
A divisor $\D=\sum d_P P$ is said to be \emph{effective} or \emph{positive} if $\D\geq 0$, i.e. if for all $P\in C(\overline{\F}_q)$, $d_P\geq 0$.
To each divisor $\D=\sum_{P}d_P P$, we associate its degree $\deg (\D)\in \Z$ defined by $\deg(\D)\egaledef \sum d_P$. This sum makes sense since the $d_P$'s are all zero but a finite number of them. 

\paragraph{Rational functions}
The field of $\F_q$--rational functions on $C$ is denoted by $\F_q(C)$.
For a nonzero function $f\in \F_q(C)$, we associate its divisor 
$$
(f)\egaledef \sum_{P\in C(\overline{\F}_q)} v_P (f).P,
$$
where $v_P$ denotes the valuation at $P$.
This sum is actually finite since the number of zeroes and poles of a function is finite.
Such a divisor is called a \emph{principal divisor}.
The positive part of $(f)$ is called \emph{the divisor of the zeroes} of $f$ and denoted by
$$
{(f)}_0\egaledef \sum_{
    P\in C(\overline{\F}_q),\
 v_P(f)>0
} v_P(f).P .
$$

\begin{lem}\label{degrePrin}
  The degree of a principal divisor is zero.
\end{lem}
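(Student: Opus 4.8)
The plan is to reduce the claim to the standard fact that a nonconstant rational function, viewed as a finite morphism to the projective line, has fibres of constant degree. If $f\in\F_q(C)$ is a nonzero constant, then $f$ has neither zeroes nor poles, so $(f)=0$ and the claim is trivial; I therefore assume $f$ nonconstant. Since the two notions of degree agree after base change, I work over $\overline{\F}_q$, where the degree of a divisor is simply the sum of its coefficients over the geometric points.

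A nonconstant $f$ is transcendental over $\F_q$, so $\F_q(f)\subset\F_q(C)$ is a finite extension, of degree $N\egaledef [\F_q(C):\F_q(f)]$, and $f$ induces a finite morphism $\phi\colon C\to\P^1$ of degree $N$ with $\phi(P)=f(P)$. First I would record the local fact that the valuation of $f$ at a point $P$ is governed by the ramification of $\phi$: if $\phi(P)=0$ then $v_P(f)$ equals the ramification index $e_P$ of $\phi$ at $P$, while if $\phi(P)=\infty$ then $v_P(f)=-e_P$, and $v_P(f)=0$ otherwise. Consequently $(f)_0=\sum_{\phi(P)=0} e_P\, P$ and the divisor of poles is $\sum_{\phi(P)=\infty} e_P\, P$, so that $(f)=(f)_0-\sum_{\phi(P)=\infty}e_P\,P$.

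The key input --- and the only nontrivial step --- is the fundamental identity for finite morphisms of smooth curves (conservation of degree in the fibres): for every point $Q\in\P^1$ one has $\sum_{\phi(P)=Q} e_P = N$; see \cite{sticht} Theorem I.4.11 or \cite{fulton}. Applying this to $Q=0$ and to $Q=\infty$ yields $\deg((f)_0)=N=\deg\big(\sum_{\phi(P)=\infty}e_P\,P\big)$, and therefore
\[
\deg((f))=\deg((f)_0)-\sum_{\phi(P)=\infty}e_P = N-N=0.
\]

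The main obstacle is thus entirely isolated in the fundamental identity $\sum_{\phi(P)=Q}e_P=N$: all the geometry explaining why the zeroes and poles of $f$ balance is packaged there. Once it is invoked --- or proved, for instance via the structure of the integral closure of $\F_q[f]$ inside $\F_q(C)$ as a free module of rank $N$ --- the conclusion is immediate. The only bookkeeping I would still check is the compatibility between the geometric degree used in the statement and the residue-degree-weighted count appearing in the function-field form of the identity, which follows by base-changing to $\overline{\F}_q$ and noting that a closed point of residue degree $d$ splits into $d$ geometric points carrying equal valuations.
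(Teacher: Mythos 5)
Your proof is correct and follows the same route as the paper, which simply cites Fulton (Chapter 8, Proposition 1) — that cited result is precisely the fibre-degree identity $\sum_{\phi(P)=Q}e_P=N$ for the morphism $\phi\colon C\to\P^1$ induced by $f$, applied at $Q=0$ and $Q=\infty$. Your expansion of that argument, including the reduction to the nonconstant case and the remark on residue degrees after base change to $\overline{\F}_q$, is accurate.
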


\begin{proof}
  \cite{fulton} Chapter 8 Proposition 1.
\end{proof}


\paragraph{Riemann--Roch spaces}

Given a divisor $\G\in \textrm{Div}_{\F_q}(C)$, one associates a vector space of rational functions defined by
$
  L(\G)  \egaledef   \{f\in \F_q(C)\ | \ (f)\geq -\G\}\cup \{0\}.
$
This space is finite dimensional and its dimension is bounded below from the Riemann--Roch theorem
$
\dim (L(\G))\geq \deg (\G)+1-\genus.
$
This inequality becomes an equality if $\deg (\G)>2\genus-2$.

\subsection{Construction and parameters of Algebraic Geometric codes}

\begin{defn}\label{AGcode}
  Let $\G$ be an $\F_q$--rational divisor on $C$ and $P_1, \ldots ,
  P_n$ be a set of distinct rational points of $C$ avoiding the
  support of $\G$. Denote by $\D$ the divisor $\D\egaledef P_1+\cdots
  +P_n$. The code $\code_L (\D,\G)$ is the image of the map

$$
\textrm{ev}_{\D}:\left\{
  \begin{array}{ccc}
    L(\G) & \rightarrow & \F_q^n \\
    f & \mapsto & (f(P_1), \ldots , f(P_n))
  \end{array}
\right. .
$$
\end{defn}

The parameters of Algebraic Geometric codes (or AG codes) can be
estimated using the Riemann--Roch theorem and Lemma \ref{degrePrin}.

\begin{prop}\label{AGparam}
  In the context of Definition \ref{AGcode}, assume that $\deg
  (\G)<n$. Then the code $\code_L(\D,\G)$ has parameters $[n,k,d]_q$ where
  $k\geq \deg (\G)+1-\genus$ and $d\geq n-\deg (\G)$. Moreover, if $2\genus-2
  <\deg (\G)$, then $k=\deg (\G)+1-\genus$.
\end{prop}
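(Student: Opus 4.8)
The plan is to establish the three parameters in turn, with the entire argument resting on a single elementary consequence of Lemma~\ref{degrePrin}. First I would record the following fact: if $D$ is any $\F_q$--rational divisor with $L(D)\neq\{0\}$, then $\deg(D)\geq 0$. Indeed, a nonzero $f\in L(D)$ satisfies $(f)\geq -D$, hence $(f)+D\geq 0$; taking degrees and using $\deg((f))=0$ (Lemma~\ref{degrePrin}) forces $\deg(D)\geq 0$. This contrapositive --- a divisor of negative degree has only the zero function in its Riemann--Roch space --- is the workhorse for both the dimension and the distance estimates.

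The length is $n$ by construction, since the codewords are $n$--tuples. For the dimension, the strategy is to show that the evaluation map $\textrm{ev}_\D\colon L(\G)\to\F_q^n$ is injective, so that $k=\dim L(\G)$. A function $f\in L(\G)$ lies in the kernel precisely when $f(P_1)=\cdots=f(P_n)=0$, i.e.\ when $f\in L(\G-\D)$. Since the $P_i$ avoid $\supp(\G)$ and $\deg(\G-\D)=\deg(\G)-n<0$ by hypothesis, the preliminary fact gives $L(\G-\D)=\{0\}$, so $\ker(\textrm{ev}_\D)=\{0\}$ and $k=\dim L(\G)$. The bound $k\geq\deg(\G)+1-\genus$ is then immediate from the Riemann--Roch inequality quoted in the excerpt, and the equality $k=\deg(\G)+1-\genus$ under the extra hypothesis $2\genus-2<\deg(\G)$ follows from the equality case of Riemann--Roch.

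For the minimum distance, I would take a nonzero codeword $\textrm{ev}_\D(f)$, with $f\in L(\G)$ nonzero, and let $w$ be its Hamming weight. The $n-w$ zero coordinates correspond to a subset of points $P_{i_1},\dots,P_{i_{n-w}}$ at which $f$ vanishes, so $f$ belongs to $L\bigl(\G-(P_{i_1}+\cdots+P_{i_{n-w}})\bigr)$. Because $f\neq 0$, this space is nontrivial, and applying the preliminary fact once more yields $\deg(\G)-(n-w)\geq 0$, that is $w\geq n-\deg(\G)$. As this holds for every nonzero codeword, $d\geq n-\deg(\G)$.

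I do not expect a genuine obstacle here: the proof is a textbook application of Riemann--Roch together with the degree-zero property of principal divisors. The only point demanding care is the bookkeeping in the injectivity and distance steps --- ensuring that vanishing at a set of points is correctly encoded as membership in $L(\G-\D')$ and that the hypothesis $\deg(\G)<n$ (which guarantees $k>0$ and keeps $d\geq n-\deg(\G)$ a nonvacuous bound) is used exactly where needed.
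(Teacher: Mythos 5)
Your proof is correct and complete. The paper gives no argument of its own here --- it simply cites Stichtenoth, Proposition II.2.2 and Corollary II.2.3 --- and what you have written is precisely the standard proof found there: the degree-zero property of principal divisors (Lemma~\ref{degrePrin}) forces $L(\G-\D')=\{0\}$ whenever $\deg(\G-\D')<0$, which yields both the injectivity of $\textrm{ev}_{\D}$ (hence $k=\dim L(\G)$, and Riemann--Roch finishes the dimension claims) and the weight bound $w\geq n-\deg(\G)$. One small quibble with your closing parenthetical: the hypothesis $\deg(\G)<n$ does not by itself guarantee $k>0$ (if $\deg(\G)<0$ then $L(\G)=\{0\}$); it is needed only for the injectivity of the evaluation map and to make the distance bound positive, and this slip has no bearing on the validity of the argument.
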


\begin{proof}
  \cite{sticht} Proposition II.2.2 and Corollary II.2.3.
\end{proof}

\begin{defn}[Designed distance of an AG code]
  The designed distance of $\code_L (\D,\G)$ is
  $d_{\textrm{AG}}^{\star}\egaledef n-\deg (\G)$.
\end{defn}

\subsection{Classical Goppa codes as Algebraic Geometric codes}
In general, one can prove that the GRS codes are the AG codes on the
projective line (see \cite{sticht} Proposition II.3.5).  Therefore,
from Proposition \ref{GoppaGRS}, classical Goppa codes are subfield
subcodes of AG codes.  In what follows we give an explicit description
of the divisors used to construct a classical Goppa code $\Gamma_q
(L,G)$ as a subfield subcode of an AG code.

\paragraph{Context} The context is that of Section \ref{Alternant}. In
addition, $P_1, \ldots , P_n$ are the points of $\P^1$ of respective
coordinates $(\alpha_1:1),\ldots , (\alpha_n:1)$ and $P_{\infty}$ is
the point ``at infinity'' of coordinates $(1:0)$.
We denote by $\D$ the divisor $\D\egaledef P_1+\cdots +P_n \in \textrm{Div}_{\F_{q^m}}(\P^1)$.
Finally, we set
\begin{equation}\label{phi}
F(X)\egaledef \prod_{i=1}^n (X-\alpha_i)\ \in \F_{q^m}[X].
\end{equation}
   
\begin{rem}\label{Poly}
  A polynomial $H\in \F_q [X]$ of degree $d$ can be regarded as a rational function on $\P^1$ having a single pole at $P_{\infty}$ with multiplicity $d$.
In particular $\deg(H) \leq d \Leftrightarrow (H)\geq -dP_{\infty} \Leftrightarrow H\in L(-dP_{\infty})$.
\end{rem}

\begin{thm}\label{GoppaToAG}
Let $G\in \F_{q^m}[X]$ be a polynomial of degree $r$ such that $0<r<n$.
Then,
$$
\Gamma_q (L,G)=\code_L (\D, \A-\E)_{|\F_q},
$$
where $\A, \E$ are positive divisors defined by $\E\egaledef {(G)}_0$ and $\A\egaledef {(F')}+(n-1)P_{\infty}$, where $F'$ denotes the derivative of $F$.
\end{thm}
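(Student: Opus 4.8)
The plan is to show the two descriptions define the same code by matching the GRS description of $\Gamma_q(L,G)$ from Proposition~\ref{GoppaGRS} against the AG-code description on $\P^1$, after first computing the relevant Riemann--Roch space. By Remark~\ref{Poly}, a polynomial of degree $<k$ lives in $L\bigl((k-1)P_\infty\bigr)$, so I expect that $\code_L(\D,\A-\E)$ will turn out to be a twisted evaluation code whose functions are polynomials of bounded degree multiplied by a fixed rational function determined by $F'$ and $G$. Concretely, first I would identify the space $L(\A-\E)$ explicitly. Since $\A=(F')+(n-1)P_\infty$ and $\E=(G)_0$, a function $f$ lies in $L(\A-\E)$ iff $(f)\ge -(F')-(n-1)P_\infty+(G)_0$, i.e. iff $f\cdot G/F'$ (or an appropriate such combination) is a polynomial of controlled degree with no poles at the $P_i$. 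I would make this precise by writing $f=h\cdot F'/G$ for $h$ ranging over a polynomial space, checking the pole/zero conditions at every place, including $P_\infty$, to pin down $\deg h$.

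Next I would compute the evaluation $\ev_\D(f)=\bigl(f(P_1),\dots,f(P_n)\bigr)$ for such $f$ and compare with the evaluation map of Definition~\ref{EvaluationMap} using the weights $\beta_i=G(\alpha_i)/\prod_{j\neq i}(\alpha_i-\alpha_j)$ appearing in Proposition~\ref{GoppaGRS}. The key arithmetic fact to invoke is that $F'(\alpha_i)=\prod_{j\neq i}(\alpha_i-\alpha_j)$, the standard formula for the derivative of $F(X)=\prod_i(X-\alpha_i)$ evaluated at a root. This is exactly what converts the AG weight $1/F'(\alpha_i)$ into the GRS weight $\beta_i$ up to the factor $G(\alpha_i)$, so that evaluating $h\cdot F'/G$ at $P_i$ reproduces, up to the intended scaling, the GRS codeword $\ev(h)$ with weights $\beta_i$. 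The plan is therefore to show $\code_L(\D,\A-\E)$ coincides with the GRS code $GRS_{q^m}(L,B,n-r)$ over $\F_{q^m}$, and then intersect with $\F_q^n$ (take the subfield subcode) to conclude equality with $\Gamma_q(L,G)$ via Proposition~\ref{GoppaGRS}.

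The main obstacle I expect is the bookkeeping of valuations at $P_\infty$ and at the points of $\supp(\E)$, together with getting the dimension/degree count exactly right. In particular I must verify that multiplying by $F'/G$ sends the polynomial space of dimension $n-r$ bijectively onto $L(\A-\E)$, which requires checking that the prescribed degree bound on $h$ is precisely $\deg h < n-r$; this hinges on correctly accounting for $\deg(F')=n-1$, for the order of vanishing imposed by $\E=(G)_0$ of degree $r$, and for the behaviour of $F'/G$ at infinity. I would confirm the degree of $\A-\E$ is $n-1-r$ so that Proposition~\ref{AGparam} gives the expected designed distance $n-\deg(\A-\E)=r+1$, matching the GRS minimum distance, and I would check that $\D$ avoids $\supp(\A-\E)$ (i.e. no $\alpha_i$ is a pole or forced zero of the relevant functions), using the hypothesis that $G$ does not vanish on $L$ and that the $\alpha_i$ are distinct. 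Once these valuation computations are pinned down, the identification of the two evaluation codes is a direct matching of weight factors, and the subfield-subcode step is immediate.
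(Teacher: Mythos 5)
Your overall strategy matches the paper's: reduce via Proposition~\ref{GoppaGRS} to showing $\code_L(\D,\A-\E)=GRS_{q^m}(L,B,n-r)$ over $\F_{q^m}$, identify the functions of $L(\A-\E)$ as polynomials twisted by a fixed rational function built from $G$ and $F'$, and use $F'(\alpha_i)=\prod_{j\neq i}(\alpha_i-\alpha_j)$ to match evaluations with the weights $\beta_i$. There is, however, one concrete error: your twisting factor is inverted. The correct substitution is $f=GH/F'$ with $H$ a polynomial of degree $<n-r$ (equivalently, $fF'/G$ is such a polynomial), since $(GH/F')=(G)+(H)-(F')\geq(\E-rP_\infty)-(n-1-r)P_\infty-\A+(n-1)P_\infty=-(\A-\E)$ and $(GH/F')(\alpha_i)=\beta_i H(\alpha_i)$. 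With the formula $f=h\,F'/G$ that you write, the divisor inequality at a zero of $G$ forces $G^2\mid h$ and the inequality at $P_\infty$ forces $\deg h\leq n-1+r-2\deg(F')$, which is negative when $\deg(F')=n-1$; so that space collapses, and in any case evaluation would produce the weights $\beta_i^{-1}$ rather than $\beta_i$. Your verbal description (``converts $1/F'(\alpha_i)$ into $\beta_i$ up to the factor $G(\alpha_i)$'') shows you have the right picture, but the formulas as written would not go through and must be reciprocated throughout.

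Beyond that slip, your route differs from the paper's in one worthwhile way: you propose to prove directly that $H\mapsto GH/F'$ is a bijection from $\F_{q^m}[X]_{<n-r}$ onto $L(\A-\E)$ by valuation bookkeeping (surjectivity amounts to checking $(fF'/G)\geq-(n-1-r)P_\infty$ for every $f\in L(\A-\E)$, so $fF'/G$ is a polynomial of degree at most $n-1-r$). The paper instead proves only the inclusion $GRS_{q^m}(L,B,n-r)\subseteq\code_L(\D,\A-\E)$ and concludes by a dimension count: $\deg(\A-\E)=n-1-r$ together with $2\genus-2=-2<\deg(\A-\E)<n$ and Proposition~\ref{AGparam} gives $\dim\code_L(\D,\A-\E)=n-r$. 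Both are valid; yours trades the Riemann--Roch dimension argument for one extra valuation computation. Two small points you should also record: $F'\neq0$ because $F$ is squarefree, hence not a $p$-th power (this is needed for $\A$ to be a well-defined positive divisor), and your check that $\D$ avoids $\supp(\A-\E)$ is indeed necessary and follows from $F'(\alpha_i)=\prod_{j\neq i}(\alpha_i-\alpha_j)\neq0$ and $G(\alpha_i)\neq0$.
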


\begin{rem}
  The above result is actually proved in \cite{sticht} (Proposition II.3.11) but using another description of classical Goppa codes (based on their parity--check matrices).
Therefore, we chose to give another proof corresponding better to the present description of classical Goppa codes.
\end{rem}

\begin{proof}[Proof of Theorem \ref{GoppaToAG}]
First, let us prove that $\A$ is well-defined and positive.
  Since $F$ has simple roots (see (\ref{phi})), it is not a $p$--th power in $\F_{q^m}[X]$ (where $p$ denotes the characteristic). Thus, $F'$ is nonzero. Moreover $F'$ has degree $\leq n-1$ (with equality if and only if $n-1$ is prime to the characteristic). Remark \ref{Poly} entails $(F') \geq -(n-1)P_{\infty}$ and $\A = (F') +(n-1)P_{\infty}\geq 0$.

\medbreak 

Let us prove the result. Thanks to Proposition \ref{GoppaGRS}, it is sufficient to prove that
$\code_L(\D, \A-\E)= GRS_{q^m} (L,B,n-r)$, where $B=(\beta_1, \ldots , \beta_n)$ with
\begin{equation}\label{beta}
\forall i \in \{1, \ldots , n\},\ \beta_i\egaledef \frac{G(\alpha_i)}{\prod_{j\neq i} (\alpha_i - \alpha_j)} \cdot
\end{equation}
Notice that, 
\begin{equation}\label{phiprime}
\forall i \in \{1, \ldots , n\},\ F'(\alpha_i)=\prod_{j\neq i} (\alpha_i - \alpha_j).
\end{equation}
Let $H$ be a polynomial in $\F_{q^m}[X]_{<n-r}$.
Remark \ref{Poly} yields
$(H)\geq -(n-r-1)P_{\infty}$ and
$$
\begin{array}{rcl}
  \left( \frac{\displaystyle GH}{\displaystyle F'}\right) =(G)+(H)-(F') & \geq & (\E-rP_{\infty}) -(n-r-1)P_{\infty} - \A +(n-1)P_{\infty}\\
  & \geq & -(\A-\E).
\end{array}
$$
Thus, $GH/F' \in L(\A-\E)$ and, from
(\ref{beta}) and (\ref{phiprime}), we have
$$
\frac{GH}{F'}(\alpha_i)=\beta_i H(\alpha_i).
$$
This yields $GRS_{q^m}(L,B,n-r) \subseteq \code_L(\D, \A-\E)$.

\medbreak

For the reverse inclusion, we prove that both codes have the same dimension.
The dimension of $GRS_{q^m}(L,B,n-r)$ is $n-r$.
For $\code_L(\D, \A-\E)$, we first compute $\deg (\A-\E)$. By
definition, $\deg (\A)=\deg ((F'))+n-1$ which equals $n-1$ from
Lemma \ref{degrePrin}.
The degree of $\E$ is that of the polynomial $G$, that is $r$.
Thus, $\deg (\A-\E)=n-1-r$. Since $r$ is assumed to satisfy $0<r<n$ and since the genus of $\P^1$ is zero, we have
$
2\genus-2=-2 < \deg (\A-\E)<n.
$
Finally, Proposition \ref{AGparam} entails $\dim \code_L (\D,
\A-\E)=\deg (\A-\E)+1-\genus = n-r$, which concludes the proof.
\end{proof}

\begin{rem}
  Another and in some sense more natural way to describe $\Gamma_q (L,G)$ as a subfield subcode of an AG code is to use differential forms on $\P^1$. 
By this way, one proves easily that $\Gamma_q (L,G)=\code_{\Omega} (\D, \E-P)_{|\F_q}$.
Then, considering the differential form $\nu\egaledef
\frac{dF}{F}$, one can prove that its divisor is
$(\nu)=\A-\D-P_{\infty}$. Afterwards, using \cite{sticht} Proposition
II.2.10, we get 
$$
\code_{\Omega}(\D,\E-P)=\code_L(\D,
(\nu)-\E+P+\D)=\code_L(\D,\A-\E).
$$
\end{rem}

The main tool for the proof of the list-decodability of classical
Goppa codes is a Theorem on the soft-decoding of AG codes, this is the reason why we
introduce our list-decoding algorithm by an Algebraic Geometric codes
point of view.

\subsection{List-decoding up to the $q$-ary Johnson radius}

\begin{thm}[\cite{GuruswamiPhD} Theorem 6.41]\label{soft}
For every $q$-ary AG-code $\code$ of block-length $n$ and
designed minimum distance $d^{\star}=n-\alpha$, there exists a
representation of the code of size polynomial in $n$ under which the
following holds.
Let $\varepsilon > 0$ be an arbitrary constant.
For $1 \leq i \leq n$ and $\lambda \in \F_q$ , let $w_{i, \lambda}$ be a non-negative real.
Then one can find in $poly(n, q, 1/\varepsilon)$ time, a list of all codewords $c = (c_1 , c_2 , \ldots , c_n)$ of $C$ that satisfy
\begin{equation}\label{softeq}\tag{$\star$}
\sum_{i=1}^n w_{i,c_i} \geq \sqrt{(n-d^{\star})\sum_{i=1}^n \sum_{\lambda \in \F_q} w_{i,\lambda}^2} +\varepsilon \max_{i, \lambda} w_{i,\lambda}.
\end{equation}
\end{thm}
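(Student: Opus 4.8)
The plan is to adapt the Guruswami--Sudan interpolation-and-root-finding paradigm to the weighted (soft-decision) setting of Koetter and Vardy, carried out intrinsically on $C$ through Riemann--Roch spaces. Recall that the codewords of the AG code $\code_L(\D,\G)$ are the vectors $(f(P_1),\dots,f(P_n))$ for $f\in L(\G)$, and that $\deg(\G)=n-d^\star$. First I would discretise the weights: fixing a large integer parameter $M$, set $m_{i,\lambda}\egaledef\lfloor M\,w_{i,\lambda}\rfloor$, an integer multiplicity close to $M w_{i,\lambda}$. Choosing an $\F_q$--rational divisor $\G^\star$ of degree $s$ whose support avoids $\D$, I would look for a nonzero function of the shape $Q=\sum_{j=0}^{\ell}Q_j\,T^j$ with $Q_j\in L(\G^\star-j\G)$, that vanishes at each pair $(P_i,\lambda)$ with multiplicity at least $m_{i,\lambda}$; here vanishing of multiplicity $m$ means that, after expanding $Q$ in a local parameter at $P_i$ and in the variable $T-\lambda$, every coefficient of total degree $<m$ is zero.

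Next I would establish the two halves of the argument. For existence, imposing the multiplicities amounts to $\sum_{i,\lambda}\binom{m_{i,\lambda}+1}{2}$ homogeneous linear conditions, whereas the number of free $\F_q$--coefficients is $\sum_{j=0}^{\ell}\dim L(\G^\star-j\G)$, bounded below by Riemann--Roch as $\sum_j\big(\deg(\G^\star-j\G)+1-\genus\big)$ over those $j$ with $\deg(\G^\star-j\G)\ge 0$. As soon as the latter strictly exceeds the former, a nonzero $Q$ exists. For the root property, suppose $f\in L(\G)$ yields a codeword with $c_i=f(P_i)$. Then $Q(f)$ lies in $L(\G^\star)$, since each term $Q_j f^j\in L(\G^\star-j\G+j\G)=L(\G^\star)$, so its total pole order is at most $s$. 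On the other hand the multiplicity condition forces $Q(f)$ to vanish at $P_i$ to order at least $m_{i,c_i}$, so its total zero order is at least $\sum_i m_{i,c_i}$. By Lemma~\ref{degrePrin} a nonzero principal divisor has degree $0$; hence if $\sum_i m_{i,c_i}>s$ the function $Q(f)$ must be identically zero, i.e.\ $(T-f)$ divides $Q$ and $f$ is recovered by factoring $Q$ over $\F_q(C)$.

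It then remains to balance the two inequalities so that the success condition $\sum_i m_{i,c_i}>s$ is implied by the hypothesis of the theorem. Up to the genus and rounding corrections, the available number of coefficients behaves like $\tfrac{s^2}{2\deg\G}=\tfrac{s^2}{2(n-d^\star)}$, while the number of conditions is about $\tfrac12\sum_{i,\lambda}m_{i,\lambda}^2$; equating these fixes $s\approx\sqrt{(n-d^\star)\sum_{i,\lambda}m_{i,\lambda}^2}$. Substituting $m_{i,\lambda}\approx M w_{i,\lambda}$ and dividing through by $M$, the condition $\sum_i m_{i,c_i}>s$ turns into exactly $\sum_i w_{i,c_i}>\sqrt{(n-d^\star)\sum_{i,\lambda}w_{i,\lambda}^2}$. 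The additive term $\varepsilon\max_{i,\lambda}w_{i,\lambda}$ is precisely the slack needed to absorb the errors from flooring the weights and from the $+1$ terms, once $M$ is taken to be a suitable polynomial in $n$ and $1/\varepsilon$.

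For the complexity, computing $Q$ is a linear-algebra problem over $\F_q$ of size polynomial in $n$, $q$ and $M$, and factoring $Q$ as a polynomial in $T$ over $\F_q(C)$ (the Roth--Ruckenstein step, transported to the curve) runs in polynomial time; this is where the promised representation of the code of size polynomial in $n$ enters, as it supplies the data --- bases of the relevant Riemann--Roch spaces and local expansions at the $P_i$ --- that make both steps effective. I expect the main obstacle to be the careful accounting of the genus: since Riemann--Roch only gives $\dim L(E)\ge\deg E+1-\genus$, the $\ell\approx s/\deg\G$ summands each lose up to $\genus$, for an aggregate loss of order $\genus\, s/\deg\G$, and one must check that choosing $M$ large enough (equivalently, spending the $\varepsilon$ slack) keeps this loss below the available margin, so that the clean bound of the statement survives the discretisation.
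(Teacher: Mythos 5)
The paper does not actually prove this statement—it is imported verbatim from Guruswami's thesis (Theorem 6.41)—and your sketch is a faithful reconstruction of the argument given there: weighted multiplicities obtained by discretising the $w_{i,\lambda}$, interpolation through the spaces $L(\G^\star-j\G)$, the pole-order-versus-zero-order argument (via the fact that a nonzero principal divisor has degree zero, Lemma~\ref{degrePrin}) forcing $Q(f)=0$ when $\sum_i m_{i,c_i}>s$, and the parameter balancing that yields the square-root condition with the $\varepsilon\max_{i,\lambda}w_{i,\lambda}$ term absorbing the rounding, the $+1$'s and the genus corrections. The two points you flag as delicate—the aggregate genus loss of order $\genus\cdot s/\deg\G$, which is indeed lower order than the $s^2/(2\deg\G)$ main term, and the need to normalise the weights (the condition is homogeneous in $w$, so one may take $\max_{i,\lambda}w_{i,\lambda}=1$) so that $M=\mathrm{poly}(n,1/\varepsilon)$ suffices—are exactly the ones the original proof handles, so your approach is correct and essentially the same as the cited source.
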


Using this result we are able to prove the following statement.

\begin{thm}\label{list}
  In the context of Theorem \ref{Egalite_generalisee}, the code $\Gamma_q (L,G^{q-1})$ can be list-decoded in polynomial time provided the number of errors $t$ satisfies
$$
t< n\left( \frac{q-1}{q} \right) \left(1- \sqrt{1-\frac{q}{q-1}\cdot \frac{d_{\textrm{Gop}}^{\star}}{n}} \right),
$$
where $d_{\textrm{Gop}}^{\star}\egaledef q\deg (G)+1$ (see Definition \ref{designed}).
That is, the code can be list-decoded up to the $q$--ary Johnson bound associated to the best determination of the minimum distance.
\end{thm}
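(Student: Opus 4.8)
The plan is to deduce everything from Guruswami's soft-decoding theorem (Theorem~\ref{soft}) applied to a suitable Algebraic Geometric code, the only genuine work being a judicious choice of the weights $w_{i,\lambda}$. First I would pass from $G^{q-1}$ to $G^q$: by Theorem~\ref{Egalite_generalisee} the hypotheses on $G$ give $\Gamma_q(L,G^{q-1})=\Gamma_q(L,G^q)$, so it suffices to list-decode the latter, whose designed distance is the larger value $d_{\textrm{Gop}}^{\star}=q\deg(G)+1$. Applying Theorem~\ref{GoppaToAG} to the polynomial $G^q$ yields $\Gamma_q(L,G^q)=\code_L(\D,\A-\E)_{|\F_q}$ with $\E=(G^q)_0$, a divisor on $\P^1$ over $\F_{q^m}$. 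Since $\deg(\A)=n-1$ (as computed in the proof of Theorem~\ref{GoppaToAG}) and $\deg(\E)=\deg(G^q)=q\deg(G)$, the AG code $\code_L(\D,\A-\E)$ has designed distance $n-\deg(\A-\E)=q\deg(G)+1=d_{\textrm{Gop}}^{\star}$; in particular $n-d^{\star}=\deg(\A-\E)$ is exactly the quantity under the square root in~(\ref{softeq}).

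Next I would run Theorem~\ref{soft} on $\code_L(\D,\A-\E)$, but read over its actual field of definition $\F_{q^m}$ (that is, with $q^m$ in place of $q$ and weights indexed by $\lambda\in\F_{q^m}$). The key conceptual idea is a subfield trick: since the received word $y=(y_1,\ldots,y_n)$ and every target codeword of $\Gamma_q(L,G^q)\subseteq\F_q^n$ have coordinates in the subfield $\F_q$, I may support the weights only on $\F_q\subseteq\F_{q^m}$. Concretely, fix constants $a\ge b\ge 0$ and set $w_{i,y_i}=a$, $w_{i,\lambda}=b$ for $\lambda\in\F_q\setminus\{y_i\}$, and $w_{i,\lambda}=0$ for $\lambda\notin\F_q$. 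Then only $q$ weights per coordinate are nonzero, so $\sum_{i=1}^n\sum_{\lambda}w_{i,\lambda}^2=n\bigl(a^2+(q-1)b^2\bigr)$; it is precisely this replacement of $q^m$ by $q$ that turns the $q^m$-ary Johnson bound into the $q$-ary one. Guruswami's algorithm returns all AG codewords over $\F_{q^m}$ satisfying~(\ref{softeq}), and intersecting this list with $\F_q^n$ produces the sought list for $\Gamma_q(L,G^q)$.

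It remains to optimise $a,b$. A codeword $c\in\F_q^n$ with $d(c,y)=e$ satisfies $\sum_i w_{i,c_i}=(n-e)a+eb$, so~(\ref{softeq}) reads
\[
(n-e)a+eb\ \ge\ \sqrt{(n-d^{\star})\,n\,\bigl(a^2+(q-1)b^2\bigr)}+\varepsilon a.
\]
As the left-hand side decreases in $e$, all codewords within distance $t$ are recovered once this holds at $e=t$; letting $\varepsilon\to 0$ and writing $\rho=b/a\in[0,1)$, the admissible radius is
\[
t\ <\ \max_{0\le\rho<1}\ \frac{n-\sqrt{(n-d^{\star})\,n\,\bigl(1+(q-1)\rho^2\bigr)}}{1-\rho}.
\]
The main obstacle is this maximisation. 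I would differentiate, clear the radical by squaring the first-order condition, and check that, in terms of $\tau=t/n$ and $\delta^{\star}=d^{\star}/n$, the optimum is the relevant root of $\frac{q}{q-1}\tau^2-2\tau+\delta^{\star}=0$, namely $\tau=\frac{q-1}{q}\bigl(1-\sqrt{1-\frac{q}{q-1}\delta^{\star}}\bigr)$. Multiplying by $n$ gives exactly the claimed radius. In the relevant regime $\delta^{\star}\le\frac{q-1}{q}$ (where the square root is real) the maximiser is interior, and one checks it satisfies $0\le\rho<1$, equivalently $\tau<\frac{q-1}{q}$, so that $a>b$ and the weights are legitimate. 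Finally, tracking the $\varepsilon\max_{i,\lambda}w_{i,\lambda}$ term shows that the strict inequality in the statement is exactly what allows decoding arbitrarily close to the bound in $\mathrm{poly}(n,q,1/\varepsilon)$ time.
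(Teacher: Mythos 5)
Your proposal is correct and follows essentially the same route as the paper: reduce to $\Gamma_q(L,G^q)$ via Theorem~\ref{Egalite_generalisee}, realise it as the subfield subcode of the AG code $\code_L(\D,\A-q\E)$ via Theorem~\ref{GoppaToAG}, and apply Theorem~\ref{soft} over $\F_{q^m}$ with two-valued weights supported on $\F_q$. The only (cosmetic) difference is that you derive the optimal weight ratio by maximising over $\rho=b/a$, whereas the paper directly plugs in the choice $a=1-\tau$, $b=\tau/(q-1)$ and verifies inequality~(\ref{ineq1}) using the identity $(1-\tau)^2+\tau^2/(q-1)=1-\delta$; your quadratic $\frac{q}{q-1}\tau^2-2\tau+\delta^{\star}=0$ is exactly the right one.
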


\begin{proof}
Set $\E\egaledef {(G)}_0$ and 
let $\D=P_1+\cdots +P_n$ be as in \S \ref{SecGoppaAG}.
From Theorem \ref{Egalite_generalisee} together with Theorem
\ref{GoppaToAG}, we have
$$\Gamma_q (L, G^{q-1})=\code_L(\D,
\A-(q-1)\E)_{|\F_q}=\code_L (\D, \A-q\E)_{|\F_q},$$
where $\A$ is as in the statement of Theorem \ref{GoppaToAG}.
We will apply Theorem \ref{soft} to $\code_L (\D, \A-q\E)$.
From Proposition \ref{AGparam}, the designed distance of this code is $d_{\textrm{AG}}^{\star}\egaledef n-\deg (\A)+q\deg (\E)$. 
Since $\deg (\A)=n-1$ and $\deg (\E)=\deg (G)$, we get
$$d_{\textrm{AG}}^{\star}=q\deg (G)+1=d_{\textrm{Gop}}^{\star}.$$

\noindent Let $\delta$ and $\tau$ be respectively the normalised designed distance and expected error rate:
\begin{equation}\label{deltatau}
\delta\egaledef  \frac{q\deg (G)+1}{n} \quad \textrm{and} \quad
\tau\egaledef \left( \frac{q-1}{q} \right) \left(1- \sqrt{1-\frac{q\delta}{q-1}} \right)\cdot 
\end{equation}

The approach is almost the same as that of \cite{GuruswamiPhD} \S 6.3.8.
Assume we have received a word $y \in \F_q^n$ and look for the list of codewords in $\Gamma_q (L, G^{q-1})$ whose Hamming distance to $y$ is at most $\gamma n$, with $\gamma < \tau$.
One can apply Theorem \ref{soft} to $\code_L (\D, \A-q\E)$ with
$$
\forall i \in \{1, \ldots , n\},\ \forall \lambda \in \F_{q^m},\ w_{i,\lambda}\egaledef \left\{
  \begin{array}{ccl}
    1-\tau & \textrm{if} & \lambda=y_i \\
    \tau/(q-1) & \textrm{if} & \lambda \in \F_q\setminus \{y_i\}\\
    0 & \textrm{if} & \lambda \in \F_{q^m}\setminus\F_q
  \end{array}
\right. .
$$

\noindent From Theorem \ref{soft}, one can get the whole list of codewords of $\Gamma_q (L,G^{q-1})$ at distance at most $\gamma n$ from y in $poly(n,q, 1/\varepsilon)$ time provided

\begin{equation}\label{ineq1}
(1-\gamma)(1-\tau)+ \gamma \left(\frac{\tau}{q-1}\right) \geq
\sqrt{(1-\delta) \left({(1-\tau)}^2+\frac{\tau^2}{q-1} \right)} + \frac{\varepsilon}{n}(1-\tau).
\end{equation}

Consider the left hand side of the expected inequality (\ref{ineq1}) and use the assumption $\gamma < \tau$ together with the easily checkable fact $\tau q/(q-1)<1$. This yields

\begin{eqnarray}
(1-\gamma)(1-\tau)+\gamma \left(\frac{\tau}{q-1}\right) & = &  1-\tau -\gamma (1-\frac{\tau q}{q-1}) \\
 & > &  1-2 \tau +\frac{\tau^2 q}{q-1} \\
\label{rouge} & > & {(1-\tau)}^2 +\frac{\tau^2}{q-1}\cdot
\end{eqnarray}

\noindent On the other hand, an easy computation gives
\begin{equation}\label{tau}
{(1-\tau)}^2+ \frac{\tau^2}{q-1}=1-\delta.  
\end{equation}

\noindent Therefore, (\ref{rouge}) and (\ref{tau}) entail
$$
(1-\gamma)(1-\tau)+\gamma \left(\frac{\tau}{q-1}\right) > {(1-\tau)}^2 +\frac{\tau^2}{q-1}= \sqrt{(1-\delta)\left((1-\tau)^2+\frac{\tau^2}{q-1} \right)},
$$

\noindent which yields the expected inequality (\ref{ineq1}) provided $\varepsilon$ is small enough.

\end{proof}

\paragraph{A remark on Algebraic Geometric codes and one point codes}
In \cite{GuruswamiPhD}, when the author deals with AG codes, he only
considers one point codes, i.e. codes of the form $\code_L (\D,sP)$ where
$P$ is a single rational point an $s$ is an integer.
Therefore, Theorem \ref{soft} is actually proved (in
\cite{GuruswamiPhD}) only for one point codes and is applied in the
proof of Theorem \ref{list} to the code $\code_L (\D, \A-q\E)$ which is
actually not one point.

Fortunately, this fact does not matter since one can prove that any AG code on $\P^1$ is equivalent to a one point code. 
In the case of $\code_L (\D, \A-q\E)$, the equivalence can be
described explicitly.
Indeed, by definition of $\A$ and $\E$ we have
$
\A-q\E=(F')+(n-1)P_{\infty}-q(G)-q(\deg(G))P_{\infty}.
$
Set $d\egaledef \deg (G)$, then we get
$
\A-q\E=(F' G^q)+(n-1-qd)P_{\infty}.
$
Consequently, one proves easily that a codeword $c=(c_1, \ldots ,
c_n)$ is in $\code_L(\D,\A-q\E)$ if and only if $(\eta_1 c_1, \ldots , \eta_n
c_n) \in \code_L (\D, (n-1-qd)P_{\infty})$, where $\eta_i$'s are
defined by
$$
\forall i \in \{1, \ldots ,n\},\ \eta_i\egaledef
F'(\alpha_i)G^q(\alpha_i).
$$


\section{List decoding of classical Goppa codes as evalu\-ation
  codes}\label{Sec:AlgoClassic}\label{DecodeClassicGoppa}

\subsection{List-decoding of general alternant codes}

In this subsection, we give a self-contained treatment of the proposed
list-decoding algorithm for alternant codes, up to the $q$-ary Johnson
bound, without the machinery of Algebraic Geometric codes. 

\begin{defn}
  Let $Q(X,Y)= \sum_{i,j}Q_{ij}X^iY^{j}\in\F_{q^m}[X,Y]$ be a
  bivariate polynomial and $s\geq0$ be an integer. We say that
  $Q(X,Y)$ has \deff{multiplicity at least $s$ at
  $(0,0)\in\F_{q^m}^2$} if and only if $Q_{ij} = 0$ for all $(i,j)$
  such that $i+j<s$.

  We say that $Q$ has \deff{multiplicity at least $s$ at point
  $(a,b)\in\F_{q^m}^2$} if and only if $Q(X+a,Y+b)$ has multiplicity
  $s$ at $(0,0)$. We denote this fact by
  $\mult\left(Q(X,Y),(a,b)\right)\geq s$.
\end{defn}
\begin{defn}
For $u,v\in \N$, the weighted degree $\wdeg_{u,v}(Q(x,y))$ of a
polynomial $Q(x,y)=\sum Q_{ij}x^i y^j$ is $\max\{ui+vy,\;
(i,j)\in\N\times\N\;|Q_{ij}\neq0\}$.\end{defn}

Let a $[n,\dimGRS,\dGRS]_{q^m}$ GRS$(L,B,\dimGRS)$ code be given and 
consider the corresponding alternant code $\code \egaledef
GRS_{|\mathbb{F}_q}$. We aim at list-decoding up to $ \gamma n$ errors,
where $\gamma$ is the relative list-decoding radius, which is
determined further.

Let $y \in\mathbb{F}_q^n$ be a received word. The main steps of the
algorithm are the following: Interpolation, Root-Finding,
Reconstruction.  Note that an auxiliary $s\in\N\setminus\{0\}$ is
needed, which is discussed further, and appropriately chosen. Now we
can sketch the algorithm. A pseudo-code is detailed further, see Algorithm \ref{algo:q-alt-decoding}.
\begin{enumerate}
\item Interpolation: Find $Q(X,Y)=\sum Q_i(X) Y^i\in\F_{q^m}[X,Y]$ such that
  \begin{enumerate}
  \item (non triviality) $Q(X,Y) \ne 0$;\label{cond:interp:non_zero}
\item (interpolation with varying multiplicities)\label{cond:interp:mult}
  \begin{itemize} \item mult$(Q(X,Y),(\alpha_i,y_i\beta_i^{-1})) \ge
  s(1-\gamma)$; \item mult$(Q(X,Y),(\alpha_i,z\beta_i^{-1})) \ge
  \frac{s\gamma}{q-1}$, for any $z \in \mathbb{F}_q \setminus
  \{y_i\}$;
\end{itemize}
  \item (weighted degree) $\wdeg_{1,\dimGRS} Q(X,Y)< sn\left((1-\gamma)^2 +
  \frac{\gamma^2}{q-1}\right)$; \label{cond:interp:wdeg}
\end{enumerate}

\item Root-Finding: Find all the factors $(Y-f(X))$ of $Q(X,Y)$, with $\deg f(X)<\dimGRS$;
\item Reconstruction: Compute the codewords associated to the $f(X)$'s
found in the Root-Finding step, using the evaluation map $\ev_{L,B}$.
Retain only those which are at distance at most $\gamma n$ from $y$.
\end{enumerate}

\begin{algorithm}
\caption{List decoding of alternant codes up to the
$q$-ary Johnson bound}
\label{algo:q-alt-decoding}
\begin{algorithmic}[1]
\renewcommand{\algorithmicrequire}{\textbf{Subroutine:}}
\REQUIRE\

{\tt Interpolation}({\tt constraints},$k-1$) finds a polynomial $Q(X,Y)$ satisyfing the constraints~\ref{cond:interp:non_zero},~\ref{cond:interp:mult},
  and~\ref{cond:interp:wdeg}
\renewcommand{\algorithmicrequire}{\textbf{Input:}}
\REQUIRE\

$L=(\alpha_1,\dots,\alpha_n)$\\
$B= (\beta_1, \ldots , \beta_n)$\\
  The associated evaluation map $\ev$\\
 $\dimGRS$\\
$C$, the alternant code $GRS((\alpha_i),(\beta_i),\dimGRS)_{|\mathbb{F}_q}$
  \\
The relative decoding radius $\gamma=(1-\epsilon)\tau$\\ 
The received word $y\in\F_q^n$\\
\renewcommand{\algorithmicrequire}{\textbf{Output:}}
\REQUIRE\
The list of codewords $c\in C$ such that $d(c,y)\leq \gamma n$
\STATE\ $s, \ell \longleftarrow$ Parameters($n,\dimGRS,
  \epsilon$), according to Equations (\ref{eq:s(epsilon)}) and (\ref{eq:list-size}).
\STATE {\tt constraints} $\gets[]$
\FOR {$i=1$ to $n$}
\FOR{ $z\in\F_q$}
\IF{$z=y_i$}
\STATE {\tt constraints} $\gets$ {\tt constraints} $ \cup 
       \left\{(\alpha_i,z\beta_i^{-1}),\lceil
       s(1-\gamma)\rceil\right\}$;
\ELSE
\STATE
       {\tt constraints} $\gets$ {\tt constraints} $ \cup 
       \left\{(\alpha_i,z\beta_i^{-1}),\lceil
       \frac{s\gamma}{q-1}\rceil\right\}$;
\ENDIF
\ENDFOR
\ENDFOR
\STATE $Q(X,Y)\gets$ {\tt Interpolation}({\tt constraints},$k-1$)
\STATE $F\gets\{ f(X)\;\mid (Y-f(X))\mid Q(X,Y)\}$
\STATE {\bf Return} $\{c=f(X)\;\mid f(X)\in F\text{\bf\ and } \deg f(X)< k-1 \text{\bf\ and } d(c,y)\leq \gamma n\}$
\end{algorithmic}
\end{algorithm}

\begin{lem}[\cite{GSListDecoding}]\label{Lem:Mult}
Let $u$ be an integer and $Q(X,Y)$ be a polynomial with multiplicity $s$ at
$(a,b)$. Then, for any $f(X)$ such that $f(a)=b$, one has $(X-a)^s
\mid Q(X,f(X))$.
\end{lem}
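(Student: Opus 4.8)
The plan is to reduce everything to the origin by a change of variables, since the definition of multiplicity is phrased there. First I would set $\tilde Q(X,Y)\egaledef Q(X+a,Y+b)$, so that by the very definition of multiplicity at $(a,b)$, the shifted polynomial $\tilde Q$ has multiplicity $s$ at $(0,0)$; concretely, writing $\tilde Q(X,Y)=\sum_{i,j}\tilde Q_{ij}X^iY^j$, this means $\tilde Q_{ij}=0$ whenever $i+j<s$. The claim $(X-a)^s\mid Q(X,f(X))$ is invariant under the substitution $X\mapsto X+a$, so it is equivalent to prove that $X^s$ divides $Q(X+a,f(X+a))$.

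Next I would introduce $g(X)\egaledef f(X+a)-b$. The hypothesis $f(a)=b$ gives exactly $g(0)=f(a)-b=0$, hence $X\mid g(X)$. With this notation one has the key identity
$$
Q(X+a,f(X+a))=\tilde Q\bigl(X,\,f(X+a)-b\bigr)=\tilde Q\bigl(X,g(X)\bigr).
$$
It then remains to show $X^s\mid \tilde Q(X,g(X))$.

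For the final step I would expand $\tilde Q(X,g(X))=\sum_{i,j}\tilde Q_{ij}X^i g(X)^j$ and argue term by term. Any term with $\tilde Q_{ij}\neq 0$ satisfies $i+j\geq s$ by the multiplicity hypothesis, and since $X\mid g(X)$ we have $X^j\mid g(X)^j$; hence $X^{i+j}\mid X^i g(X)^j$, and in particular $X^s\mid X^i g(X)^j$. Summing, $X^s\mid \tilde Q(X,g(X))=Q(X+a,f(X+a))$, which after reversing the substitution yields $(X-a)^s\mid Q(X,f(X))$, as desired.

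There is no real obstacle here beyond careful bookkeeping: the whole content is the translation-invariance of the multiplicity condition together with the elementary observation that $f(a)=b$ forces the composed series $g$ to vanish at $0$. The only point to watch is keeping the two substitutions ($X\mapsto X+a$ on the polynomial ring and the consistent use of $Y=g(X)+b$) aligned, so that the coefficient vanishing $\tilde Q_{ij}=0$ for $i+j<s$ is applied to the correct, recentred polynomial. (The integer $u$ mentioned in the statement plays no role and appears to be vestigial.)
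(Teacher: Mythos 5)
Your proof is correct and complete: the translation to the origin, the observation that $f(a)=b$ forces $X\mid g(X)$ where $g(X)=f(X+a)-b$, and the term-by-term divisibility $X^{i+j}\mid X^i g(X)^j$ together give exactly the standard argument, which is the one found in the cited reference \cite{GSListDecoding}. The paper itself supplies no proof for this lemma (it is stated with a citation only), so your write-up fills that gap faithfully; you are also right that the integer $u$ in the statement is vestigial.
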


\begin{prop}
Let $y$ be the received word, and $Q(X,Y)$ satisfying
  conditions~\ref{cond:interp:non_zero},~\ref{cond:interp:mult},
  and~\ref{cond:interp:wdeg} above.  Let $f(X)$ be a polynomial such that $\deg
  f(X)<\dimGRS$ and accordingly, let $c=\ev_{L,B}(f(X))$. If
  $d(c,y)\leq \gamma n$, then $Q(X,f(X))=0$.
\end{prop}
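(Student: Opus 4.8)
The plan is to form the univariate polynomial $g(X)\egaledef Q(X,f(X))$ and to derive a contradiction between an upper bound on $\deg g$ coming from the weighted-degree condition and a lower bound coming from the accumulated multiplicities at the interpolation points, unless $g$ is identically zero.

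First I would record the upper bound. Writing $Q(X,Y)=\sum_{i,j}Q_{ij}X^iY^j$ and substituting $Y=f(X)$ with $\deg f<\dimGRS$, each monomial $X^if(X)^j$ has degree at most $i+j(\dimGRS-1)\le \wdeg_{1,\dimGRS}(X^iY^j)$, so $\deg g\le \wdeg_{1,\dimGRS}Q$. Condition~\ref{cond:interp:wdeg} then yields the \emph{strict} bound $\deg g< sn\left((1-\gamma)^2+\frac{\gamma^2}{q-1}\right)$.

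Next I would build the lower bound. Since $c=\ev_{L,B}(f)$, we have $c_i=\beta_i f(\alpha_i)$, hence $f(\alpha_i)=c_i\beta_i^{-1}$, so the point $(\alpha_i,f(\alpha_i))$ is exactly one of the interpolation points of condition~\ref{cond:interp:mult}: if $c_i=y_i$ its multiplicity is at least $s(1-\gamma)$, and if $c_i\neq y_i$ then, as $c_i\in\F_q\setminus\{y_i\}$, it is at least $\frac{s\gamma}{q-1}$. Applying Lemma~\ref{Lem:Mult} at each of the $n$ distinct abscissae $\alpha_i$ shows that the coprime factors $(X-\alpha_i)^{m_i}$ all divide $g$, so if $g\neq0$ then $\deg g\ge\sum_i m_i$. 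Setting $A=\{i:c_i=y_i\}$ and $\delta_0=\frac{d(c,y)}{n}=\frac{|A^c|}{n}\le\gamma$, this reads
\[
\deg g\ge sn\left[(1-\delta_0)(1-\gamma)+\delta_0\,\frac{\gamma}{q-1}\right].
\]

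The main obstacle, and the real content of the argument, is to check that this lower bound meets the weighted-degree threshold for \emph{every} admissible error rate $\delta_0\le\gamma$, not just for the maximal one. I would observe that the bracketed quantity is affine in $\delta_0$ with slope $\frac{\gamma}{q-1}-(1-\gamma)=\frac{\gamma q}{q-1}-1$, which is negative thanks to the ``easily checkable fact'' $\frac{\gamma q}{q-1}<1$ already used in the proof of Theorem~\ref{list}. Hence the bracket is decreasing in $\delta_0$ and attains its minimum over $\delta_0\le\gamma$ at $\delta_0=\gamma$, where it equals precisely $(1-\gamma)^2+\frac{\gamma^2}{q-1}$. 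Consequently $\deg g\ge sn\left((1-\gamma)^2+\frac{\gamma^2}{q-1}\right)$, contradicting the strict upper bound above; the strictness in condition~\ref{cond:interp:wdeg} is exactly what closes the boundary case $d(c,y)=\gamma n$. We therefore conclude $g=Q(X,f(X))=0$.
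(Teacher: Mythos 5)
Your proposal is correct and follows essentially the same route as the paper: bound $\deg Q(X,f(X))$ below via Lemma~\ref{Lem:Mult} applied at the $n$ interpolation abscissae, observe that the resulting affine function of the error rate is decreasing (so the minimum occurs at error rate $\gamma$ and equals $sn\bigl((1-\gamma)^2+\frac{\gamma^2}{q-1}\bigr)$), and contrast with the strict upper bound from the weighted-degree condition. The only cosmetic difference is that the paper keeps the ceilings $\ceiling{s(1-\gamma)}$ and $\ceiling{s\gamma/(q-1)}$ on the multiplicities, which only strengthens the lower bound you derive.
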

\begin{proof}
Assume that $d(\ev(f(X)),y)=\theta n\leq \gamma n$. Set $I
\egaledef \left\{i,f(x_i)=y_i\beta_i^{-1}\right\}$ and $\overline I
\egaledef \left\{i,f(x_i)\neq y_i
\beta_i^{-1}\right\}$. Obviously we have $|I| = n(1-\theta)$ and
$|\overline{I}| = \theta n$.  Note that, from Lemma~\ref{Lem:Mult},
$Q(X,f(X))$ is divisible by
\[
\prod_{i\in I}(X-\alpha_i)^{\ceiling{s(1-\gamma)}}\times\prod_{j\in
  \overline I}(X-\alpha_j)^{\ceiling{s\gamma/(q-1)}},
\]
which is a polynomial of degree
$D=n(1-\theta)\ceiling{s(1-\gamma)}+n\theta\ceiling{\frac{s\gamma}{q-1}}$. This
degree is a decreasing function of $\theta$ for $\gamma <
\frac{q-1}{q}$, since it is an affine function of the variable
$\theta$, whose leading term is
\[
\theta n\left(\ceiling{s(1-\gamma)}+\ceiling{\frac{s\gamma}{q-1}}\right)<0.
\] The minimum is  reached for
$\theta=\gamma$, and is  greater than $sn\left((1-\gamma)^2 +
\frac{\gamma^2}{q-1}\right)$. Thus, $D\geq sn\left((1-\gamma)^2 +
\frac{\gamma^2}{q-1}\right)$.
On the other hand, the weighted degree condition imposed 
$Q(X,Y)$ implies that $\deg Q(X,f(X))< sn \left((1-\gamma) ^2 +
\frac{\gamma^2}{q-1})\right)$. Thus $Q(X,f(X))=0$.
\end{proof}

\begin{prop}
  Let $\deltaGRS=\frac \dGRS n$ be the relative minimum distance of a
  GRS code as above, defining an alternant subcode over $\F_q$.  Set
\begin{equation}
\label{tau:johnson:q}
  \tau \egaledef
  \frac{q-1}{q}\left(1-\sqrt{1-\frac{q}{q-1}\deltaGRS}\right).
\end{equation}
Then, for any $\gamma<\tau$, there exists $s$ large enough such that a polynomial $Q(X,Y)$, satisfying the three previous
constraints~\ref{cond:interp:non_zero},~\ref{cond:interp:mult},
and~\ref{cond:interp:wdeg}, always exists, whatever the received word.
\end{prop}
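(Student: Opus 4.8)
The plan is to read constraints~\ref{cond:interp:non_zero}--\ref{cond:interp:wdeg} as a single homogeneous linear system over $\F_{q^m}$ in the unknown coefficients $Q_{ij}$ of $Q(X,Y)$. The multiplicity constraints~\ref{cond:interp:mult} are linear equations on the $Q_{ij}$, whereas constraint~\ref{cond:interp:wdeg} merely selects which monomials $X^iY^j$ are allowed, i.e. fixes the set of unknowns. A homogeneous linear system with strictly more unknowns than equations has a nonzero solution, so it suffices to show that the number of admissible monomials strictly exceeds the number of linear equations. Both counts depend only on $n,q,\dimGRS,\gamma$ and $s$, and not on the received word $y$ (which only translates the interpolation points); this is exactly why the conclusion will hold whatever $y$ is.

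First I would count the equations. Imposing multiplicity at least $m$ at a point forces the vanishing of all coefficients of the shifted polynomial of total degree $<m$, hence amounts to $\binom{m+1}{2}$ linear conditions. Summing over the $n$ points $(\alpha_i,y_i\beta_i^{-1})$ carrying multiplicity $m_1\egaledef\ceiling{s(1-\gamma)}$ and the $n(q-1)$ points $(\alpha_i,z\beta_i^{-1})$ with $z\in\F_q\setminus\{y_i\}$ carrying multiplicity $m_2\egaledef\ceiling{s\gamma/(q-1)}$ gives
\[ N_{\mathrm{eq}}=n\binom{m_1+1}{2}+n(q-1)\binom{m_2+1}{2}=\frac{ns^2}{2}\left((1-\gamma)^2+\frac{\gamma^2}{q-1}\right)+\bigO(s). \]
Next I would count the unknowns, namely the monomials with $\wdeg_{1,\dimGRS}(X^iY^j)=i+\dimGRS\,j<W$, where $W\egaledef sn\left((1-\gamma)^2+\frac{\gamma^2}{q-1}\right)$. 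Writing $A\egaledef(1-\gamma)^2+\frac{\gamma^2}{q-1}$, a lattice-point count along the lines $j=\text{const}$ yields
\[ N_{\mathrm{mon}}=\#\{(i,j)\in\N^2\ :\ i+\dimGRS\,j<W\}=\frac{W^2}{2\dimGRS}+\bigO(s)=\frac{n^2A^2}{2\dimGRS}\,s^2+\bigO(s). \]

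Comparing the leading $s^2$-coefficients, $N_{\mathrm{mon}}>N_{\mathrm{eq}}$ for all large $s$ as soon as $\frac{n^2A^2}{2\dimGRS}>\frac{nA}{2}$, that is $nA>\dimGRS$. Since a GRS code is MDS, $\dimGRS=n-\dGRS+1$, so $n(1-\deltaGRS)=\dimGRS-1$ and this threshold is, up to the unit off-set discussed below, the condition $A>1-\deltaGRS$. It remains to match this with the hypothesis. The function $g(\gamma)\egaledef(1-\gamma)^2+\frac{\gamma^2}{q-1}=A$ is a convex quadratic with vertex at $\gamma=\frac{q-1}{q}$, hence strictly decreasing on $\intervalle0{\frac{q-1}{q}}$; and solving $g(\gamma)=1-\deltaGRS$ for $\gamma$ gives exactly the root $\tau$ of~(\ref{tau:johnson:q}), with $\tau\in\intervalle0{\frac{q-1}{q}}$, so that $g(\tau)=1-\deltaGRS$. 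By monotonicity, $A=g(\gamma)>1-\deltaGRS$ holds precisely when $\gamma<\tau$. Thus for every $\gamma<\tau$ there is $s$ large enough with $N_{\mathrm{mon}}>N_{\mathrm{eq}}$, and a nonzero $Q$ exists.

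The delicate point is twofold. Making ``$s$ large enough'' rigorous requires controlling the $\bigO(s)$ remainders coming from the ceilings in $m_1,m_2$ and from the boundary lattice points in $N_{\mathrm{mon}}$: for a fixed $\gamma<\tau$ the gap $g(\gamma)-(1-\deltaGRS)$ is a fixed positive constant, so the leading term of $N_{\mathrm{mon}}-N_{\mathrm{eq}}$ is a positive multiple of $s^2$ while every error term is $\bigO(s)$; bounding each ceiling by $x\le\ceiling{x}<x+1$ then produces an explicit threshold $s_0(\gamma)$. The second point is that this threshold is sensitive to an off-by-one in the weighted-degree budget: the raw comparison gives $nA>\dimGRS$, whereas the clean equivalence with $\gamma<\tau$ uses $nA>\dimGRS-1=n(1-\deltaGRS)$. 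The two coincide once the monomials are counted against the weight $\dimGRS-1$, which is the maximal degree of the sought $f(X)$ and is exactly what the root-finding step uses ($\deg f<\dimGRS$); phrasing~\ref{cond:interp:wdeg} with parameter $\dimGRS-1$ lands the threshold precisely at $\tau$. Finally, the resulting explicit dependence of $s_0$ on $\gamma$, equivalently on $\varepsilon=1-\gamma/\tau$, is the very input needed for the later complexity analysis.
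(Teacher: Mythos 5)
Your proposal is correct and follows essentially the same route as the paper: both count the unknown coefficients allowed by the weighted-degree bound against the $\sum\binom{s_i+1}{2}$ linear conditions imposed by the varying multiplicities, and both identify the threshold via the identity $(1-\tau)^2+\frac{\tau^2}{q-1}=1-\deltaGRS$ together with the monotonicity of $\mu(\gamma)$ on $\intervalle0{\frac{q-1}{q}}$. The only cosmetic difference is that the paper keeps the $\frac{1}{s}$ term and solves the quadratic $\mu^2-\pseudoR\mu-\frac{\pseudoR}{s}>0$ exactly to get an explicit radius $\gamma_0(s)$ (later reused in the complexity analysis), whereas you compare leading $s^2$-coefficients and absorb the rest into $\bigO(s)$; your closing remark on the $\dimGRS$ versus $\dimGRS-1$ weight is the correct reading of the paper's own convention.
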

\begin{proof}
To make sure that, for every received word $y$, a non zero $Q(X,Y)$
exists, it is enough to prove that we have more indeterminates than equations in the
linear system given by \ref{cond:interp:mult},
and~\ref{cond:interp:wdeg}, since it is homogeneous. That is (see Appendix),
\begin{equation}\label{eq:ninc>neq}
\frac{s^2 n^2 ( (1-\gamma)^2 + \frac {\gamma ^2}{q-1})^2 }{2(k-1)}> \left(\binom{s(1-\gamma)+1}{2}+(q-1)\binom{s\frac\gamma{q-1}+1}{2}\right)n,
\end{equation}
which can be rewritten as,
\begin{align}
\left((1-\gamma) ^2 + \frac{\gamma ^2}{q-1} \right) ^2 >\pseudoR\left((1-\gamma) ^2 + \frac{\gamma ^2}{q-1} +\frac1s\right),
\end{align}
where $\pseudoR \egaledef \frac{\dimGRS-1}n \cdot$
Thus, we find that $\mu=\mu(\gamma)=(1-\gamma) ^2 + \frac{\gamma
^2}{q-1} $ must satisfy $
\mu^2 -\pseudoR\mu-\frac {\pseudoR}s>0$.
The roots of the equation  $\mu^2 -\pseudoR\mu-\frac {\pseudoR}s=0$ are 
\[
\mu_0=\frac{\pseudoR-\sqrt{\pseudoR^2+4\frac{\pseudoR} s}}2,\quad
\mu_1=\frac{\pseudoR+\sqrt{\pseudoR^2+4\frac{\pseudoR} s}}2 \cdot
\]
Note that the function $\mu(\gamma)$ is decreasing with
$\gamma\in[0,1-\frac1q]$, as shown on Fig~\ref{fig:mu} for the
particular case $q=2$.
\begin{figure}
\begin{center}
\includegraphics[scale=.25]{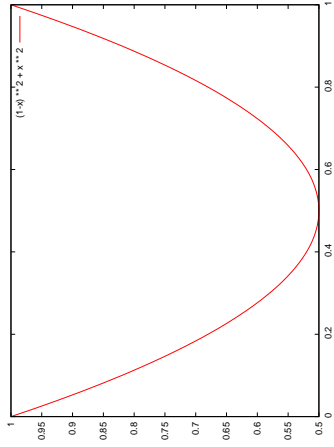}
\end{center}
\caption{Behaviour of the $\mu(\gamma)$ function, for $\gamma\in[0,1]$, with $q=2$.}
\label{fig:mu}
\end{figure}
Only $\mu_1$ is positive and thus we must have $\mu>\mu_1$, i.e.
\begin{align}
(1-\gamma)^2+\frac{\gamma ^2}{q-1} > \mu_1.
\end{align}
 Again, we have two roots for the equation $(1-\gamma)^2+\frac{\gamma
 ^2}{q-1}=\mu_1$, namely:
\begin{align}
\gamma_0&=\frac{q-1}q\left(1-\sqrt{1-\frac q{q-1}(1-\mu_1})\right)\\
\gamma_1&=\frac{q-1}q\left((1+\sqrt{1+\frac q{q-1}(1-\mu_1})\right).
\end{align}
Only $\gamma_0<\frac{q-1}q$, and thus we must have
\begin{align}\label{eq:tau(s)}
\gamma&<\gamma_0=\frac{q-1}q(1-\sqrt{1-\frac q{q-1}(1-\mu_1}).
\end{align}
Then, when $s\rightarrow \infty$, we have $\mu_1\rightarrow \pseudoR$, and we
get
\begin{align}
\gamma&<\tau=\frac{q-1}q\left(1-\sqrt{1-\frac q{q-1}(1-\pseudoR)}\right)
\end{align}
Using the fact that $\dimGRS-1=n-\dGRS$,
i.e\ $\pseudoR=1-\deltaGRS$, we get
\[
\tau= \frac{q-1}q\left(1-\sqrt{1-\frac q{q-1}\deltaGRS}\right),
\]
which the $q$-ary Johnson radius.
\end{proof}

The previous Proposition proves that this method enables to list-decode
any alternant code up to the $q$-ary Johnson bound. This bound
is better than the error correction capacities of the previous algorithms
\cite{GSListDecoding,BernsteinLDGC}. For the binary case, we plot in
Figure \ref{Fig:ErrorCapacity} the binary Johnson bound (weighted
multiplicities, this paper), the generic Johnson bound (straight
Guruswami-Sudan, or Bernstein), and the unambiguous decoding bound
(Patterson). As usually, the higher the normalised minimum
distance is, the better the Johnson bound is.

\begin{figure}[ht]
  \begin{center} \includegraphics[width=10cm]{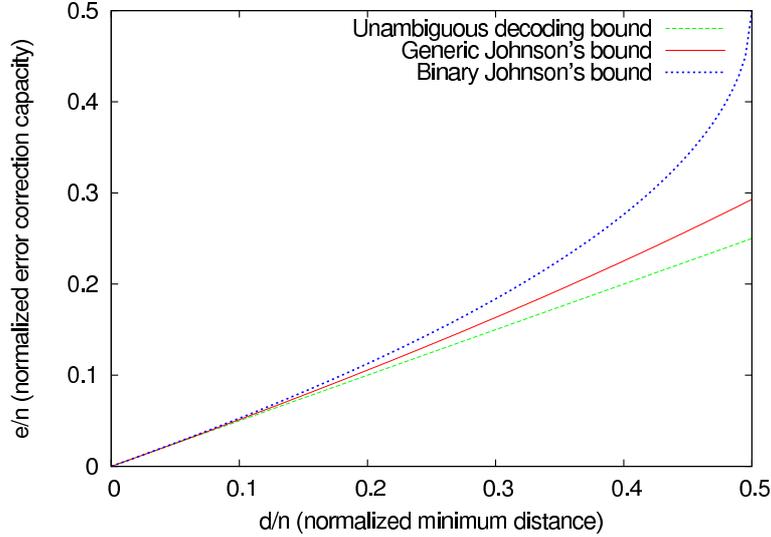}
    \caption{\label{Fig:ErrorCapacity} Comparison of the relative error
    capacities of different decoding algorithm for binary alternant
    codes --- This applies to binary square-free Goppa codes.} \end{center}
\end{figure}

\subsection{Complexity Analysis}
The main issue is to give explicitly how large the ``order of
multiplicity'' $s$ has to be, to approach closely the limit
correction radius,  $\tau(\deltaGRS)$ as given by~(\ref{tau:johnson:q}).
\begin{lem}
To list-decode up to a relative radius of $\gamma=(1-\varepsilon)\tau$,
it is enough to have an auxiliary multiplicity $s$ of size
$\bigO(\frac 1\varepsilon)$, where the constant in the big-$\bigO$
depends only on $q$ and on the pseudo-rate $\pseudoR=\frac {\dimGRS-1} n$ of the
GRS code.
\end{lem}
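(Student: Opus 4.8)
The plan is to read off, from the proof of the previous Proposition, the exact decoding radius attainable at a \emph{finite} multiplicity $s$, and then to control how fast it converges to $\tau$ as $s$ grows. Recall that the interpolation polynomial is guaranteed to exist as soon as $\gamma<\gamma_0(s)$, where
$$
\gamma_0(s)=\frac{q-1}{q}\left(1-\sqrt{1-\frac q{q-1}(1-\mu_1)}\right),\qquad \mu_1=\frac{\pseudoR+\sqrt{\pseudoR^2+4\pseudoR/s}}{2}.
$$
As $s\to\infty$ we have $\mu_1\to\pseudoR$ and hence $\gamma_0(s)\to\tau$; the whole task is to turn these two limits into quantitative rates in $1/s$.

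First I would bound $\mu_1-\pseudoR$. Rationalising the numerator of $\sqrt{\pseudoR^2+4\pseudoR/s}-\pseudoR$ and bounding the resulting denominator below by $2\pseudoR$ yields $\mu_1-\pseudoR\leq 1/s$, an estimate with an absolute constant. Next I would transfer this to a bound on $\tau-\gamma_0(s)$ by writing both quantities as $g(\pseudoR)$ and $g(\mu_1)$ for the single function $g(m)=\frac{q-1}{q}\bigl(1-\sqrt{1-\frac q{q-1}(1-m)}\bigr)$. Since $g$ is smooth and decreasing, with $|g'(m)|=\tfrac12\bigl(1-\frac q{q-1}(1-m)\bigr)^{-1/2}$ itself decreasing on the relevant interval, and since $\mu_1\geq\pseudoR$, the mean value theorem gives $\tau-\gamma_0(s)\leq|g'(\pseudoR)|\,(\mu_1-\pseudoR)\leq C_0/s$, where $C_0=\tfrac12(1-\frac q{q-1}\deltaGRS)^{-1/2}$ depends only on $q$ and $\pseudoR=1-\deltaGRS$.

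To conclude, decoding up to $\gamma=(1-\varepsilon)\tau$ only requires $\gamma<\gamma_0(s)$, that is $\tau-\gamma_0(s)<\varepsilon\tau$; by the previous bound this holds as soon as $s>C_0/(\varepsilon\tau)$, which is $\bigO(1/\varepsilon)$ with constant $C_0/\tau$ depending only on $q$ and $\pseudoR$. The step I expect to be most delicate is the transfer via $g'$: one must check that $g'(\pseudoR)$ is genuinely finite, which holds precisely because the hypotheses keep $\deltaGRS$ strictly below $\frac{q-1}{q}$, away from the endpoint where the square root vanishes, $\tau$ degenerates, and the $\bigO$-constant would blow up. Everything else is elementary manipulation of the two nested square roots.
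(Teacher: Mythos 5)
Your proof is correct and follows essentially the same route as the paper: both bound $\mu_1(s)-\pseudoR$ by $1/s$ and then propagate this through the outer square root to obtain $\tau-\gamma_0(s)\le \frac{1}{2s\sqrt{K_q(\pseudoR)}}$ with $K_q(\pseudoR)=1-\frac{q}{q-1}\deltaGRS$, hence $s=\bigO(1/\varepsilon)$ with the identical constant. The only cosmetic difference is that you invoke the mean value theorem for the transfer step where the paper applies $\sqrt{1+x}\le 1+x/2$ a second time after factoring out $K_q(\pseudoR)$; your explicit remark that $\deltaGRS<\frac{q-1}{q}$ is needed to keep the constant finite is a point the paper leaves implicit.
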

\begin{proof}
To get the dependency on $s$, we work out Equation (\ref{eq:tau(s)}). Let us denote
by $\gamma(s)$ the achievable relative correction radius for a given $s$. We have 
\begin{align}
\gamma      (s)
    &=
\frac{q-1}q
      \left(1-
        \sqrt{
	   1-
	   \frac q{q-1}
	    \left(
	         1-\mu_1(s)
	    \right)
	    }
      \right),
\end{align}
with $\mu_1(s)=\frac{\pseudoR+\sqrt{\pseudoR^2+4\frac{\pseudoR} s}}2$. We use that
$\sqrt{1+x}<1+\frac x2$, for all $x>0$. First, we have the bound:
\begin{align}
\mu_1(s)&=\frac{\pseudoR+\sqrt{\pseudoR^2+4\frac{\pseudoR} s}}2\\
    &= \frac \pseudoR2\left(1+\sqrt{1+\frac{4}{s\pseudoR}}\right)\\
    &\leq\frac \pseudoR2\left(1+\left(1+\frac{4}{2s\pseudoR}\right)\right)\\
    &=\pseudoR+\frac1s.
\end{align}
Now, calling $K_q(\pseudoR)$ the quantity $1-\frac q{q-1}(1-\pseudoR)$, we compute:
\begin{align}
\gamma(s)      
    &=
\frac{q-1}q
      \left(1-
        \sqrt{
	   1-
	   \frac q{q-1}
	    \left(
	         1-\mu_1(s)
	    \right)
	    }
      \right)\\
    &\geq \frac{q-1}q
      \left(1-
        \sqrt{
	   1-
	   \frac q{q-1}
	    \left(
	         1-\pseudoR-\frac 1s
	    \right)
	    }
      \right)
\\
&= \frac{q-1}q
      \left(1-
        \sqrt{
	   K_q(\pseudoR)+\frac q{q-1}\frac 1s
	    }
      \right)
\\
&=\frac{q-1}q
      \left(1-
        \sqrt{
	   K_q(\pseudoR)\left(1+\frac q{q-1}\frac 1s\frac 1{K_q(\pseudoR)}\right)
	    }
      \right)
\\
&\geq \frac{q-1}q
      \left(1-
        \sqrt{
	   K_q(\pseudoR)}\left(1+\frac 12\frac q{q-1}\frac 1s\frac 1{K_q(\pseudoR)}\right)
      \right)
\\
&=\frac{q-1}q
      \left(1-
        \sqrt{
	   K_q(\pseudoR)}-\sqrt{
	   K_q(\pseudoR)}\frac 12\frac q{q-1}\frac 1s\frac 1{K_q(\pseudoR)}
      \right)\\
&=\frac{q-1}q
      \left(1-
        \sqrt{
	   K_q(\pseudoR)}\right)-\frac 12\frac 1{s\sqrt{K_q(\pseudoR)}}\\
&=\tau-\frac 1{2s\sqrt{K_q(\pseudoR)}}\\
&=\tau\left(1-\frac 1{2s\tau\sqrt{K_q(\pseudoR)}}\right)
\end{align}
Thus, to reach a relative radius $\gamma=(1-\varepsilon)\tau$, it is
enough to take
\begin{equation}\label{eq:s(epsilon)}
s=\frac{1}{2\varepsilon\tau\sqrt{K_q(\pseudoR)}}=\bigO (\frac1\varepsilon ).
\end{equation}
\end{proof}
The most expensive computational step in these kinds of list-decoding
algorithms is the interpolation step, while root-finding is
surprisingly
cheaper~\cite{Augot-Pecquet:IEEE_IT2000,Roth-Ruckenstein:IEEE_IT2000}.
This is the reason why we focus on the complexity of this step. We assume the use
of the so-called Koetter algorithm~\cite{Koetter:THS1996} (see for
instance \cite{Trifonov:IEEE_IT2010,Koetter-Ma-Vardy:ARXIV2010} for a
recent exposition), to compute the complexity of our method. It is in
general admitted that this algorithm, which also may help in various
interpolation problems, has complexity $\bigO(lC^2)$, where $l$ is the
$Y$-degree of the $Q(X,Y)$ polynomial, and $C$ is the number of linear
equations given by the interpolation constraints. It can be seen as an
instance of the Buchberger-M\"oller
problem~\cite{Moller:1982:CMP:646656.700243}.

\begin{cor}
  The proposed list-decoding runs in
  \begin{equation}
  \bigO (\frac 1{\varepsilon^{-5}}n^ 2)
  \end{equation}
  field operations
  to list-decode up to $(1-\varepsilon) \tau\cdot n$ errors, where
  the constant in the big-$\bigO$ depends only on $q$ and the
  pseudo-rate $\pseudoR$.
\end{cor}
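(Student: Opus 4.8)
The plan is to charge the entire cost to the interpolation step, which the paper already identifies as the dominant one, and to invoke the complexity $\bigO(lC^2)$ of Koetter's algorithm, where $C$ is the total number of linear interpolation constraints and $l$ is the $Y$-degree of the interpolating polynomial $Q(X,Y)$. Once $C$ and $l$ are expressed as functions of $n$, $q$, $\pseudoR$ and the auxiliary multiplicity $s$, it suffices to substitute the estimate $s=\bigO(1/\varepsilon)$ furnished by the preceding Lemma (Equation (\ref{eq:s(epsilon)})).

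First I would count the constraints. At each of the $n$ coordinates, the multiplicity condition at $(\alpha_i,y_i\beta_i^{-1})$ imposes $\binom{\ceiling{s(1-\gamma)}+1}{2}$ linear equations, while the $q-1$ conditions at the points $(\alpha_i,z\beta_i^{-1})$ with $z\neq y_i$ impose $(q-1)\binom{\ceiling{s\gamma/(q-1)}+1}{2}$ further equations, exactly as in the feasibility inequality (\ref{eq:ninc>neq}). Summing over the $n$ coordinates gives
\[
C=n\left(\binom{\ceiling{s(1-\gamma)}+1}{2}+(q-1)\binom{\ceiling{s\gamma/(q-1)}+1}{2}\right)=\bigO(ns^2),
\]
since each binomial is $\bigO(s^2)$ and the leading term is governed by $(1-\gamma)^2+\gamma^2/(q-1)=\mu(\gamma)$, which is bounded by a constant depending only on $q$. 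Next I would bound the $Y$-degree. The weighted-degree constraint forces $\wdeg_{1,\dimGRS}Q<sn\mu(\gamma)$, so any monomial $X^aY^b$ of $Q$ satisfies $b\dimGRS\leq a+b\dimGRS<sn\mu(\gamma)$, whence
\[
l\leq \frac{sn\mu(\gamma)}{\dimGRS}=\bigO(s),
\]
the hidden constant depending only on $\pseudoR=(\dimGRS-1)/n$ and $q$; indeed the feasibility analysis gives $\mu(\gamma)>\pseudoR$, so the ratio $\mu(\gamma)/\pseudoR$ stays bounded away from $0$ and from $\infty$.

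Combining these two estimates in Koetter's bound yields
\[
\bigO(lC^2)=\bigO\!\left(s\,(ns^2)^2\right)=\bigO(n^2 s^5),
\]
and substituting $s=\bigO(1/\varepsilon)$ gives the announced complexity $\bigO(\varepsilon^{-5}n^2)$. No step is genuinely hard; the only mild subtlety, and the point to watch, is that all suppressed constants depend solely on $q$ and on the pseudo-rate $\pseudoR$ and not on $n$ or $\varepsilon$. This is guaranteed for $s$ by the previous Lemma, and for $C$ and $l$ by the boundedness of $\mu(\gamma)$ and of the ratio $\mu(\gamma)/\pseudoR$. The ceilings in the binomials only perturb lower-order terms and are absorbed into the big-$\bigO$.
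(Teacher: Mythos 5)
Your proposal is correct and follows essentially the same route as the paper: count the interpolation constraints as $C=\bigO(ns^2)$ via the multiplicity conditions, bound the $Y$-degree by $l\leq sn\mu(\gamma)/\dimGRS=\bigO(s)$, apply Koetter's $\bigO(lC^2)$ bound, and substitute $s=\bigO(1/\varepsilon)$. The only thing the paper does that you assert without checking is to verify that the Root-Finding step is indeed dominated (it costs $\bigO(\ell^3 k^2)=\bigO(s^3n^2)$ for small $q$ by the Roth--Ruckenstein algorithm, which is below the interpolation cost), but this is a minor omission.
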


\begin{proof}
Assume that we would like to decode up to $n\gamma = n(1
-\varepsilon)\tau$. The number of equations given by the interpolation
conditions can be seen to be $\bigO(ns^2)$ (see Equation (\ref{eq:ninc>neq})). Now, the list size $\ell$ is
bounded above by the $Y$-degree of the interpolation polynomial $Q(X,Y)$,
which is at most
\begin{equation}\label{eq:list-size}
\frac{sn((1-\gamma)^2 + \frac{\gamma^2}{q-1})}{k-1}=\bigO(s),
\end{equation}
for fixed $\pseudoR=\frac{k-1}n$. Fitting $s=\bigO(\frac1\varepsilon)$, we
conclude that this method runs in $\bigO(n^2\varepsilon^{-5})$.

Regarding the Root-Finding step, one can
use~\cite{Roth-Ruckenstein:IEEE_IT2000}, where an algorithm
of complexity $\bigO(\ell^3 k^2)$ is proposed, assuming $q$ is
small. Indeed, classical bivariate factorisation or root finding
algorithms rely on a univariate root-finding step, which is not
deterministic polynomially in the size of its input, when $q$
grows. But our interest is for small $q$, i.e.\ 2 or 3, and we get
$\bigO(s^3 n^2)$, which is less than the cost of the interpolation step.
\end{proof}

\begin{cor}\label{Lem:GeneralComplexity}
To reach the non relative $q$-ary Johnson radius: 
\[
\left\lceil
  \frac{q-1}{q}n\left(1-\sqrt{1-\frac{q}{q-1} \frac{\dGRS}{n}}\right)
  -1\right\rceil,
\] it is enough to have $s=\bigO(\frac1n)$.  Then, the
  number of field operations, is
  $$
  \bigO (n^7),
  $$
where the constant in the
  big-$\bigO$ only depends on  $q$ and  $\pseudoR$.

\end{cor}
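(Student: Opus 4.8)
The plan is to obtain this as a direct consequence of the two preceding results: the Lemma bounding the required multiplicity by $s = \bigO(1/\varepsilon)$ when decoding at relative radius $\gamma = (1-\varepsilon)\tau$, and the Corollary bounding the running time by $\bigO(n^2\varepsilon^{-5})$. The only new input is to determine which $\varepsilon$ is needed to reach the \emph{integer} radius $e = \lceil \tau n\rceil - 1$, with $\tau$ as in~(\ref{tau:johnson:q}).

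First I would convert the integer target into a constraint on the relative radius. Decoding every codeword within Hamming distance $e$ of the received word is exactly running the algorithm at relative radius $\gamma = e/n$, and the Proposition guarantees this succeeds as soon as $\gamma < \tau$. Since $e = \lceil \tau n\rceil - 1$ satisfies $\tau n - 1 \le e < \tau n$, we have $e/n < \tau$ and $\tau - e/n \le \tfrac1n$. Thus it suffices to approach $\tau$ within an additive gap of $\bigO(1/n)$ on the relative scale, i.e. to take $\varepsilon = (\tau - \gamma)/\tau = \bigO(1/(\tau n)) = \bigO(1/n)$, the implied constant absorbing the dependence on $\tau$ and hence on $q$ and $\pseudoR$.

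Next I would feed $\varepsilon = \bigO(1/n)$ into~(\ref{eq:s(epsilon)}): this gives $s = \tfrac{1}{2\varepsilon\tau\sqrt{K_q(\pseudoR)}} = \bigO(1/\varepsilon) = \bigO(n)$, proving the first assertion. For the complexity I would substitute $\varepsilon = \Theta(1/n)$ into the bound $\bigO(n^2\varepsilon^{-5})$ of the previous Corollary, obtaining $\bigO(n^2\cdot n^5) = \bigO(n^7)$ for the interpolation step, which dominates; the root-finding step costs $\bigO(\ell^3 k^2) = \bigO(s^3 n^2) = \bigO(n^5)$ and is absorbed. All constants depend only on $q$ and $\pseudoR$, as required.

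The step I expect to be most delicate is the first one. The upper bound $\tau - e/n \le \tfrac1n$ controls the gap in the convenient direction, but certifying success at the exact integer radius really requires $\gamma \ge e/n$, and when $\tau n$ sits just above an integer the admissible window $[e/n,\tau)$ can be narrow; some care---or a mild genericity assumption on the position of $\tau n$ relative to $\Z$---is needed to be sure that $\varepsilon = \Theta(1/n)$ always lands inside it. Once that is granted, all the remaining substitutions into the earlier estimates are routine.
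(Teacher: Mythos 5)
Your proof is correct and follows essentially the paper's own (two-line) argument: write the integer radius as $n\tau(1-\varepsilon)$ with $\varepsilon=\bigO(1/n)$, then substitute into~(\ref{eq:s(epsilon)}) and the preceding complexity corollary; you also rightly read the statement's ``$s=\bigO(\frac1n)$'' as a typo for $s=\bigO(n)$. The delicate point you flag --- that the complexity bound really needs $\varepsilon=\Omega(1/n)$, which can fail when $\tau n$ sits just above an integer --- is genuine, but the paper's proof silently makes the same leap, so your treatment is, if anything, the more careful of the two.
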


\begin{proof}
It is enough to consider that 
\[
\left\lceil
  \frac{q-1}{q}n\left(1-\sqrt{1-\frac{q}{q-1} \frac{\dGRS}{n}}\right)
  -1\right\rceil=n\tau(1-\varepsilon),
\]
with $\varepsilon=\bigO(\frac1n)$.
\end{proof}

\subsection{Application to classical binary Goppa codes}

The most obvious application of this algorithm is the binary Goppa
codes defined with a square-free polynomial $G$ (we do not detail the
result for the general $q$-ary case, which is less relevant in
practice). Indeed, since we have
\[
\Gamma_2(L,G)=\Gamma_2(L,G^2),
\] 
both codes benefit at least from the dimension of $\Gamma_2(L,G)$ and
the distance of $\Gamma_2(L,G^2)$.  Thus, if $\deg G=t$, we have
the  decoding radii given in Table~\ref{tab:goppa}.
In addition, we compare in Table \ref{Tab:Compare} the different decoding radii
for practical values.

\begin{table}
\begin{center}
\begin{tabular}{|c|c|c|c|}
\hline
Generic list-decoding & Bernstein &Binary list-decoding\\
\hline
\null & & \\
$\ceiling{n-\sqrt{n(n-2t)}}-1$ & $n-\sqrt{n(n-2t-2)}$&
$\ceiling{\frac12\left(n-\sqrt{n(n-4t-2}\right)}-1$\\
 &  & \\
\hline
\end{tabular}
\end{center}
\caption{Comparison of the claimed decoding radii in terms of $t$,
the degree of the square-free polynomial $G(X)$ using to construct the
Goppa code.}\label{tab:goppa}
\end{table}


\begin{algorithm}
\caption{List-decoding of binary Goppa codes}
\label{algo:goppa-list-decode}
\begin{algorithmic}[1]
\REQUIRE\ 

$L=(\alpha_1,\dots,\alpha_n)$ \\
A Goppa polynomial $G$, square-free\\
The corresponding Goppa code $C=\Gamma_2(L,G)$  \\
  The associated evaluation map $\ev$\\
The relative decoding radius $\gamma=(1-\varepsilon)\tau$\\
The received word $y\in\F_q^n$\\
\STATE\ View $\Gamma_2(L,G)$ as $\Gamma_2(L,G^2)$
\STATE\ Consider the Generalised Reed-Solomon  code $GRS_{q^m}(L,B,k)$ above $\Gamma_2(L,G^2)$
\STATE\ Use Algorithm~\ref{algo:q-alt-decoding} to find all the codewords at  distance $\gamma n$ of $y$
\end{algorithmic}
\end{algorithm}

\begin{table}[ht]
  \begin{center}
    \begin{tabular}{|c|c|c||c|c|c|c|}
      \hline
      $n$ & $k$ & $t$ & Guruswami-Sudan & Bernstein & Binary list-decoding
      \\
      \hline
      \hline
      16 & 4 & 3  & 4 & 4 & 5\\
      \hline
      \hline
      32 & 2 & 6 & 7 & 8 & 9\\
      \hline
      \hline
      64 & 16 & 8 & 9 & 9 & 10\\
      \hline
      64 & 4 & 10 & 11 & 12 & 13\\
      \hline
      \hline
      128 & 23 & 15 & 16 & 17 & 18\\
      \hline
      128 & 2 & 18 & 20 & 20 & 22\\
      \hline
      \hline
      256 & 48 & 26 & 28 & 28 & 30\\
      \hline
      256 & 8 & 31 & 33 & 34 & 36\\
      \hline
      \hline
      512 & 197 & 35 &  36 & 37 & 38\\
      \hline
      512 & 107 & 45 &  47 & 48 & 50\\
      \hline
      512 & 17 & 55 &  58 & 59 & 63\\
      \hline
      \hline
      1024 & 524 & 50  & 51 & 52 & 53\\
      \hline
      1024 & 424 & 60  & 62 & 62 & 74\\
      \hline
      1024 & 324 & 70  & 73 & 73 & 76\\
      \hline
      1024 & 224 & 80  & 83 & 84 & 88\\
      \hline
      1024 & 124 & 90  & 94 & 95 & 100\\
      \hline
      \hline
      2048 & 948 & 100  & 103 & 103 & 105\\ 
      \hline
      2048 & 728 & 120  & 124 & 124 & 128\\ 
      \hline
      2048 & 508 & 140  & 145 & 146 & 151\\ 
      \hline
      2048 & 398 & 150  & 156 & 157 & 163\\ 
      \hline
      2048 & 288 & 160  & 167 & 167 & 175\\ 
      \hline
      2048 & 178 & 170  & 178 & 178 & 187\\ 
      \hline
    \end{tabular}
    \caption{\label{Tab:Compare}Comparison of the error capacities of
      different decoding algorithms for square-free binary Goppa
      codes, with respect to the length $n$, the dimension $k$, and
      the degree $t$ of $G(X)$.}
  \end{center}
\end{table}


\appendix

\section{The number of unknowns}

\begin{prop}
Let $Q(X,Y)\in\F[X,Y]$ be a bivariate polynomial such that
$\wdeg_{1,k-1} Q(X,Y)< D$. Then, the number $N_{k-1,D}$ of nonzero
coefficients of $Q(X,Y)$ is larger than or equal to
\begin{equation}
\frac{D^2}{2(k-1)} \cdot
\end{equation}
\end{prop}
\begin{proof}
Let us write
\[
Q(X,Y)=\sum_{i=0}^{\ell} Q_i Y^i,
\]
with $\deg Q_j(X)<D-(k-1)j$, and  $\ell$ maximal such that $(k-1)\ell<D$. Then
\begin{align}
N_{k-1,D}=&\sum_{i=0}^\ell\left(D-(k-1)i\right)\\
   =&(\ell+1)D-(k-1)\frac{\ell(\ell+1)}2\\
   =&(\ell+1)\left(D-(k-1)\frac l 2 \right)\\
   >&(\ell+1)\left(D-\frac D 2 \right)\\
   =&(\ell+1)\frac D 2 \\
   \geq &\frac {D^2}{2(k-1)}\cdot
\end{align}
\end{proof}

\section{The number of interpolation constraints}
The following proposition can be found in any book of computer
algebra, for instance~\cite{Cox-Little-Oshea:IVAA1992}.
\begin{prop}
Let $Q(X,Y)\in \F[X,Y]$ be a bivariate polynomial. The number of terms
of degree at most $s$ in $Q(X,Y)$ is $\binom {s+1}2$.
\end{prop}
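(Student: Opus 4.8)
The plan is a direct lattice-point count, since the quantity in question is purely combinatorial. A monomial of $Q(X,Y)$ has the form $X^iY^j$ with $(i,j)\in\N\times\N$, and its total degree is $i+j$. The relevant index set is the one attached to the multiplicity-$s$ condition recalled at the start of Section~\ref{Sec:AlgoClassic}, namely the pairs $(i,j)$ with $i+j<s$; it is exactly this count that produces the $\binom{s+1}{2}$ appearing in the interpolation constraints of Equation~(\ref{eq:ninc>neq}). So the first thing I would do is fix the convention and evaluate $\#\{(i,j)\in\N^2 : i+j<s\}$.

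First approach (stratify by total degree). I would partition the index set according to the value of $d=i+j$. For a fixed $d$, the monomials of exact degree $d$ are $X^d,\,X^{d-1}Y,\dots,XY^{d-1},\,Y^d$, so there are exactly $d+1$ of them: the exponent $i$ ranges freely over $\{0,1,\dots,d\}$ and then $j=d-i$ is forced. Letting $d$ run over $0,1,\dots,s-1$ yields a disjoint union, whence
\[
\#\{(i,j)\in\N^2 : i+j<s\}=\sum_{d=0}^{s-1}(d+1)=\sum_{k=1}^{s}k=\frac{s(s+1)}{2}=\binom{s+1}{2},
\]
which is the claimed value.

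Second approach (a bijection, if one prefers a closed-form argument). The constraint $i+j\le s-1$ with $i,j\ge 0$ becomes an equality upon adjoining a slack variable $k\ge 0$ with $i+j+k=s-1$. The map $(i,j)\mapsto(i,j,s-1-i-j)$ is then a bijection between our index set and the nonnegative integer solutions of $i+j+k=s-1$, whose number is the stars-and-bars count $\binom{(s-1)+2}{2}=\binom{s+1}{2}$. This recovers the same value, and agrees with the standard count of $\binom{m+n}{n}$ monomials of degree at most $m$ in $n$ variables, here with $m=s-1$ and $n=2$.

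There is no genuine obstacle in this statement: it is a routine triangular-number identity. The only point demanding care is the off-by-one convention. Reading \emph{degree at most $s$} literally as $i+j\le s$ would give $\binom{s+2}{2}$; the value $\binom{s+1}{2}$ corresponds to the strict inequality $i+j<s$, which is precisely the multiplicity-$s$ index set $\{(i,j):\ i+j<s\}$ used throughout Section~\ref{Sec:AlgoClassic}. Keeping the count consistent with that convention is the entire content of the verification.
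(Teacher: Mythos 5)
Your count is correct, and it is worth noting that the paper itself supplies no proof at all for this proposition --- it simply points to a computer-algebra textbook --- so your elementary argument (either the stratification $\sum_{d=0}^{s-1}(d+1)$ or the stars-and-bars bijection with solutions of $i+j+k=s-1$) is a perfectly adequate replacement, and both of your routes are standard and sound. The most valuable part of your write-up is the off-by-one observation, which is a genuine (if minor) defect in the statement as printed: taken literally, ``degree at most $s$'' means $i+j\le s$ and the count is $\binom{s+2}{2}$, not $\binom{s+1}{2}$. The value $\binom{s+1}{2}$ is the number of monomials $X^iY^j$ with $i+j<s$, which is exactly the index set killed by the condition $\mult(Q,(a,b))\ge s$ in the definition of multiplicity at the start of Section~\ref{Sec:AlgoClassic}, and it is this count that the subsequent corollary and Equation~(\ref{eq:ninc>neq}) rely on. So the proposition should read ``degree less than $s$'' (equivalently, ``at most $s-1$''); with that correction your proof is complete, and nothing downstream in the paper is affected since the corollary already uses the strict-inequality convention.
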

\begin{cor}
The condition $\mult(Q(X,Y)),(a,b))\geq s$ imposes $\binom {s+1}2$
linear equations on the coefficients of $Q(X,Y)$. Let
$(a_i,b_i)\in\F^2$ be points and $s_i\in\N$ be multiplicities,
$i\in\intervalle1n$. Then the number of linear equations imposed by
the conditions
\begin{equation}
\mult(Q(X,Y),(a_i,b_i))\geq s_i,\quad i\in\intervalle1n,
\end{equation}
is 
\begin{equation}
\sum_{i=1}^n \binom{s_i+1}2.
\end{equation}
\end{cor}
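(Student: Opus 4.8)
The plan is to derive the count directly from the defining property of multiplicity together with the lattice-point estimate of the preceding Proposition. First I would unwind the definition: by construction, $\mult(Q(X,Y),(a,b))\geq s$ means precisely that the shifted polynomial $\tilde Q(X,Y)\egaledef Q(X+a,Y+b)$ has multiplicity at least $s$ at the origin, i.e. every coefficient of $\tilde Q$ attached to a monomial $X^kY^l$ with $k+l<s$ must vanish. The crucial observation I would make is that the shift $Q\mapsto\tilde Q$ is an $\F$-linear operation on the coefficient vector: expanding $(X+a)^i(Y+b)^j$ by the binomial theorem shows that each coefficient of $\tilde Q$ is an $\F$-linear combination of the original coefficients $Q_{ij}$. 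Consequently, each condition ``the coefficient of $X^kY^l$ in $\tilde Q$ is zero'' is one homogeneous linear equation in the unknowns $Q_{ij}$.

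Next I would count these equations. The monomials $X^kY^l$ with $k+l<s$ are exactly those of total degree at most $s-1$, and their number is $1+2+\cdots+s=\binom{s+1}{2}$, which is the count furnished by the preceding Proposition (interpreted, consistently with the multiplicity definition, as the monomials of degree strictly below $s$). This settles the first assertion: a single multiplicity-$s$ constraint at the point $(a,b)$ imposes exactly $\binom{s+1}{2}$ linear equations on the coefficients of $Q$.

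For the second assertion I would simply superimpose the constraints coming from the $n$ distinct points. By the first part, each point $(a_i,b_i)$ with prescribed multiplicity $s_i$ contributes its own family of $\binom{s_i+1}{2}$ homogeneous linear equations, and imposing all the conditions simultaneously amounts to taking the union of these families. Hence the total number of linear equations is the sum $\sum_{i=1}^n\binom{s_i+1}{2}$, as claimed.

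I expect essentially no obstacle here, since the statement is a bookkeeping consequence of the definition of multiplicity and the lattice-point count already established. The only points deserving a word of care are the linearity of the shift $Q\mapsto Q(X+a,Y+b)$, which is what turns the vanishing of a coefficient of $\tilde Q$ into a genuine linear equation on the coefficients of $Q$, and the mild off-by-one between ``multiplicity at least $s$'' (degree strictly below $s$) and the phrasing of the previous Proposition. I would also emphasise that what is counted is the \emph{number} of equations and not their rank, which is precisely the quantity entering the dimension-counting argument behind Equation~(\ref{eq:ninc>neq}).
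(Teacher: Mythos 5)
Your argument is correct and is exactly the reasoning the paper leaves implicit: the corollary is stated without proof as an immediate consequence of the preceding Proposition, and your write-up simply spells out the two silent ingredients (linearity of the shift $Q\mapsto Q(X+a,Y+b)$ in the coefficients, and the count $\binom{s+1}{2}$ of monomials of total degree strictly below $s$). Your closing remarks on the off-by-one in the Proposition's phrasing and on counting equations rather than their rank are both apt and consistent with how the bound is used in Equation~(\ref{eq:ninc>neq}).
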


\bibliographystyle{alpha}
\bibliography{biblio}

\begin{thebibliography}{KMV10}

\bibitem[AP00]{Augot-Pecquet:IEEE_IT2000}
Daniel Augot and Lancelot Pecquet.
\newblock A {H}ensel lifting to replace factorization in list decoding of
  algebraic-geometric and {R}eed-{S}olomon codes.
\newblock {\em IEEE Transactions on Information Theory}, 46(7):2605--2613,
  2000.

\bibitem[Ber68]{Berlekamp:AGT1968}
Elwyn~R. Berlekamp.
\newblock {\em Algebraic coding theory}.
\newblock McGraw-Hill, 1968.

\bibitem[Ber84]{Berlekamp:ACTRE}
Elwyn~R. Berlekamp.
\newblock {\em Algebraic Coding Theory, Revised 1984 Edition}.
\newblock Aegean Park Press, 1984.

\bibitem[Ber08]{BernsteinLDGC}
Daniel~J. Bernstein.
\newblock List decoding for binary {G}oppa codes.
\newblock \url{http://cr.yp.to/codes/goppalist-20081107.pdf}, 2008.

\bibitem[BLP10]{BernLangPeters}
Daniel~J. Bernstein, Tanja Lange, and Christiane Peters.
\newblock Wild {M}c{E}liece.
\newblock Cryptology ePrint Archive, Report 2010/410, 2010.

\bibitem[CLO92]{Cox-Little-Oshea:IVAA1992}
David Cox, John Little, and Donald O'Shea.
\newblock {\em Ideals, Varieties and Algorithms}.
\newblock Springer, 1992.

\bibitem[Eli91]{Elias:IEEE_IT1991}
Peter Elias.
\newblock {Error-correcting codes for list decoding}.
\newblock {\em IEEE Transactions on Information Theory}, 37(1):5--12, 1991.

\bibitem[Ful89]{fulton}
William Fulton.
\newblock {\em Algebraic curves}.
\newblock Advanced Book Classics. Addison-Wesley Publishing Company, Redwood
  City, CA, 1989.

\bibitem[GS99]{GSListDecoding}
V.~Guruswami and M.~Sudan.
\newblock Improved decoding of {R}eed-{S}olomon and algebraic-geometry codes.
\newblock {\em IEEE Transactions on Information Theory}, 45(6):1757 --1767,
  1999.

\bibitem[Gur04]{GuruswamiPhD}
Venkatesan Guruswami.
\newblock {\em {List Decoding of Error-Correcting Codes - Winning Thesis of the
  2002 ACM Doctoral Dissertation Competition}}, volume 3282 of {\em Lectures
  Notes in Computer Science}.
\newblock Springer, 2004.

\bibitem[Gur07]{Guruswami:ARILD2007}
Venkatesan Guruswami.
\newblock {\em Algorithmic results in list decoding}.
\newblock {Now Publishers Inc}, 2007.

\bibitem[KMV10]{Koetter-Ma-Vardy:ARXIV2010}
Ralf Koetter, Jun Ma, and Alexander Vardy.
\newblock {The Re-Encoding Transformation in Algebraic List-Decoding of
  Reed-Solomon Codes}, 2010.

\bibitem[K{\"o}t96]{Koetter:THS1996}
Ralf K{\"o}tter.
\newblock {\em On Algebraic Decoding of Algebraic-Geometric and Cyclic Codes}.
\newblock PhD thesis, University of Link\"oping, 1996.

\bibitem[KV03]{Koetter}
Ralf K\"otter and Alexander Vardy.
\newblock Algebraic soft-decision decoding of {R}eed-{S}olomon codes.
\newblock {\em IEEE Transactions on Information Theory}, 49(11):2809--2825,
  2003.

\bibitem[MB82]{Moller:1982:CMP:646656.700243}
H.~Michael M\"{o}ller and Bruno Buchberger.
\newblock The construction of multivariate polynomials with preassigned zeros.
\newblock In {\em Proceedings of the European Computer Algebra Conference on
  Computer Algebra}, EUROCAM'82, pages 24--31, London, UK, 1982.
  Springer-Verlag.

\bibitem[MS83]{SlMcW}
F.~J. Macwilliams and N.~J.~A. Sloane.
\newblock {\em The Theory of Error-Correcting Codes}.
\newblock North-Holland Mathematical Library. North Holland, 1983.

\bibitem[Pat75]{Patterson}
N.~Patterson.
\newblock The algebraic decoding of {G}oppa codes.
\newblock {\em IEEE Transactions on Information Theory}, 21(2):203--207, 1975.

\bibitem[Rot06]{Roth:ITCT2006}
Ron Roth.
\newblock {\em Introduction to Coding Theory}.
\newblock Cambridge University Press, 2006.

\bibitem[RR00]{Roth-Ruckenstein:IEEE_IT2000}
R.~M. Roth and G.~Ruckenstein.
\newblock Efficient decoding of {R}eed-{S}olomon codes beyond half the minimum
  distance.
\newblock {\em IEEE Transactions on Information Theory}, 46(1):246--257, 2000.

\bibitem[Sti93]{sticht}
Henning Stichtenoth.
\newblock {\em Algebraic function fields and codes}.
\newblock Universitext. Springer-Verlag, Berlin, 1993.

\bibitem[Sud97]{SudanAlgo}
Madhu Sudan.
\newblock Decoding of {R}eed-{S}olomon codes beyond the error-correction bound.
\newblock {\em Journal of Complexity}, 13(1):180 -- 193, 1997.

\bibitem[TR03]{RothTalISIT}
Ido Tal and Ronny~M Roth.
\newblock On list decoding of alternant codes in the {H}amming and {L}ee
  metrics.
\newblock In {\em Proceedings of IEEE International Symposium on Information
  Theory}, page 364, 2003.

\bibitem[Tri10]{Trifonov:IEEE_IT2010}
Peter~V. Trifonov.
\newblock {Efficient Interpolation in the {G}uruswami-{S}udan Algorithm}.
\newblock {\em IEEE Transactions on Information Theory}, 56(9):4341--4349,
  2010.

\bibitem[VK00]{Koetter-Vardy:PREPRINT2000}
Alexander Vardy and Ralf Koetter.
\newblock {Algebraic Soft-Decision Decoding of Reed-Solomon Codes}.
\newblock 53 pages, circulated before the IEEE publication, 2000.

\bibitem[VNT07]{NTV}
Serge Vl\u{a}du\c{t}, Dmitry Nogin, and Michael Tsfasman.
\newblock {\em Algebraic Geometric Codes: Basic Notions}.
\newblock American Mathematical Society, Boston, MA, USA, 2007.

\bibitem[Wu08]{Wu:IEEE_IT2008}
Y.~Wu.
\newblock New list decoding algorithms for {R}eed-{S}olomon and {BCH} codes.
\newblock {\em IEEE Transactions on Information Theory}, 54(8):3611--3630,
  2008.

\end{thebibliography}

\end{document}